\newcommand{\id}{\mathds{1}}
\theoremstyle{plain}
\newtheorem{theorem}{Theorem}[section]
\newtheorem{lemma}[theorem]{Lemma}
\newtheorem{corollary}[theorem]{Corollary}
\newtheorem{definition}[theorem]{Definition}
\newtheorem{assumption}[theorem]{Assumption}
\DeclareMathAlphabet{\pazocal}{OMS}{zplm}{m}{n}
\numberwithin{equation}{section}
\title{  Indirect acquisition of information in quantum mechanics: states associated with tail events }
\begin{document}


\author[1]{ \small M. Ballesteros\thanks{miguel.ballesteros@iimas.unam.mx}}
\author[2]{M. Fraas\thanks{martin.fraas@gmail.com}}
\author[3]{J. Fr\"ohlich\thanks{juerg@phys.ethz.ch}}
\author[4]{B. Schubnel\thanks{baptiste.schubnel@yahoo.fr}}
\affil[1]{Department of Mathematical Physics, Applied Mathematics and Systems Research Institute (IIMAS),  National Autonomous University of Mexico (UNAM)}
\affil[2]{Instituut voor Theoretische Fysica, KU Leuven}
\affil[3]{Institut f{\"u}r Theoretische Physik,   ETH Zurich}
\affil[4]{Swiss Federal Railways (SBB)}

\renewcommand\Authands{ and }

\normalsize 
\date \today

\maketitle

\begin{center}
Dedicated to the memory of Rudolf Haag
\end{center}

\begin{abstract}
The problem of reconstructing information on a physical system from data acquired in long sequences of direct (projective) measurements of some simple physical quantities - {\it histories} -  is analyzed within quantum mechanics; that is, the quantum theory of indirect measurements, and, in particular, of \textit{non-demolition} measurements is studied. It is shown that indirect measurements of \textit{time-independent} features of physical systems can be described in terms of quantum-mechanical operators belonging to an {\it  ``algebra of asymptotic observables''}. Our proof involves associating a natural measure space with certain sets of histories of a system and showing that quantum-mechanical states of the system determine probability measures on this space. Our main result then says that functions on that space of histories \textit{measurable at infinity} (i.e., functions that only depend on the ``tails'' of histories) correspond to operators in the algebra of asymptotic observables.

\end{abstract}

\section{Introduction}

This paper is a contribution to the foundations of quantum mechanics, in particular to the theory of \textit{indirect (non-demolition) measurements}. We study the problem of reconstructing information on a physical system, $S$, from data obtained in long sequences of direct (or projective) measurements of just a few physical quantities characteristic of $S$. Such sequences of outcomes of direct measurements are called {\it  histories}, and information reconstructed from a history is called an \textit{``indirect measurement''}. If the information reconstructed from histories is time-independent one speaks of an (indirect) ``non-demolition measurement''.

In this paper, we further elaborate on concepts introduced in \cite{BaFrFrSc} and prove some new results on non-demolition measurements. We will formulate our insights and results within the so-called  ``{\it ETH approach}'' to quantum mechanics, and ``$ETH$'' is an abbreviation for `` Events,  Trees and Histories''. (This approach has been introduced in \cite{Juerg}, \cite{BlFrSc}, \cite{FS}. The work presented in the following sections is a continuation of efforts described in these papers and in \cite{BaFrFrSc}.)

The theory of indirect and, in particular, of indirect non-demolition measurements \cite{Kraus} has recently attracted considerable attention; see \cite{Mass}, \cite{BaBe}, (and references given there). Beautiful experimental advances connected to the theoretical issues discussed in our paper have been made in the groups of S. Haroche and D. J. Wineland, which earned them the 2012 Nobel Prize in Physics. For example, the Haroche-Raimond group has succeeded in indirectly determining the number of photons in a cavity from long sequences of direct measurements of passive probes sent through the cavity, one after another; see \cite{guerlin}.  The probes are Rydberg atoms, specifically Rubidium atoms prepared in a coherent superposition of two highly excited internal states. When these atoms stream through the cavity their state precesses in the subspace spanned by those two states at an angular velocity depending on the number of photons in the cavity. A measurement  of the projection of the internal state of each probe onto a fixed state (i.e., a ``polarization measurement'') is made right after it has emerged from the cavity. It turns out that the statistics of the resulting measurement data determines the number of photons (modulo some integer $N<\infty$) present in the cavity uniquely; (relative phases between states corresponding to different photon numbers become unobservable). These experiments provide an example of the phenomenon of ``purification of states'' described in \cite{Mass}. A gedanken experiment with very similar features, where the cavity is replaced by a quantum dot capturing electrons, is described in Section \ref{SE}. 

Next, we sketch the main objectives of this paper. We take for granted the theory of repeated direct measurements in quantum mechanics, as described in \cite{Griffiths}, \cite{BlFrSc}, and references given there. This theory provides a theoretical underpinning for a formula, originally proposed by L\"uders, Schwinger and Wigner (the ``LSW formula''), enabling one to assign a \textit{probability} to every (finite) ``history'' of direct measurement outcomes. The LSW formula together with a basic hypothesis of \textit{``decoherence''} play an important role in our analysis. Our goal is to clarify what kind of information about a physical system can be retrieved from the asymptotics of very long histories of direct measurement outcomes, i.e., from tails of histories. This is a problem in the theory of indirect measurements in quantum mechanics, as first developed in \cite{Kraus}; see also \cite{Holevo}, \cite{Mass}, \cite{BaBe}, \cite{BaBeBe2}, \cite{BaBeTi}, \cite{BePel}, \cite{BaFrFrSc}, and references given there. We will show that, under appropriate hypotheses, tails of histories are in a $1$ - $1$ correspondence with eigenvalues of \textit{time-independent} observables of the system that generate a commutative algebra. The decomposition of probabilities of histories into convex combinations of extremal probabilities predicting 
``$0$ - $1$ laws'' for the tails of histories (i.e., a ``disintegration formula'' for probability measures on the space of histories) will be shown to be determined by the spectral decomposition of those time-independent observables.

Our results are formulated mathematically in Theorems   \ref{rep}, \ref{Este} and   \ref{decom} of Section \ref{NI}, which summarize the main new insights contained in our paper. An example of a simple physical system that can be analyzed with the help of our general results is described in Section \ref{SE}. Proofs of our main theorems are presented in Section \ref{Proofs}.  In Appendix \ref{appendix}  we recall some basic results from the theory of operator algebras.

\subsection{The general framework}\label{GF}

As announced above, the purpose of this paper is to make a contribution to the theory of indirect non-demolition measurements. A necessary prerequisite for a theory of indirect measurements is a theory of direct (``projective'') measurements in quantum mechanics. There are numerous, partly contradictory proposals of how to set up a quantum theory of direct measurements or observations. In this paper we 
follow the so-called ``{\it ETH approach}'' to quantum theory, as sketched in \cite{FS}, \cite{BlFrSc} and \cite{Juerg}. Here we briefly summarize those ingredients of this approach that will be needed later.

In the ``ETH approach'' to quantum theory, a physical system $S$ is characterized by the following data;
(definitions of mathematical objects such as $C^{*}$- and von Neumann algebras, of states on operator algebras, and a sketch of the so-called Gel'fand-Naimark-Segal construction are presented in Appendix A).
\begin{itemize}
\item[I.]{Directly measurable physical quantities or properties (sometimes called \textit{``observables''}) of $S$ are represented by bounded self-adjoint linear operators that generate a $C^{*}$-algebra $\mathcal{E}\equiv \mathcal{E}_{S}$; i.e., an algebra of bounded operators invariant under taking adjoints, $^{*}$, and closed in norm, $\Vert \cdot \Vert$. (The norm of a $C^{*}$-algebra has properties we know of the norm of bounded operators acting on a Hilbert space. It is often convenient to view $\mathcal{E}$ as an abstract algebra not necessarily described in terms of bounded operators acting on a Hilbert space -- see, e.g., \cite{BrRo}, and Appendix A.) For simplicity, we suppose that the spectra of all operators in $\mathcal{E}$
corresponding to physical quantities/observables of $S$ are \textit{finite point spectra}. Then an operator $A=A^{*}$ representing a physical quantity of $S$ has a spectral decomposition
\begin{equation}\label{specdec}
A=\sum_{\alpha \in \sigma(A)} \alpha\, \Pi_{\alpha},
\end{equation}
where $\sigma(A)$ is the spectrum of $A$, i.e., the set of eigenvalues of the operator $A=A^{*}$, and 
$\Pi_{\alpha}\equiv \Pi_{\alpha}^{(A)}$ is the spectral projection of $A$ corresponding to the eigenvalue 
$\alpha \in \sigma(A)$ of $A$.
}
\item[II.]{An \textit{event} potentially detectable in $S$ corresponds to an orthogonal projection $\Pi=\Pi^{*}$ in the algebra $\mathcal{E}$. But \textit{not all} orthogonal projections in $\mathcal{E}$ represent events: projections corresponding to potentially detectable events are spectral projections of selfadjoint operators in $\mathcal{E}$ that represent physical quantities/observables of $S$.
}
\item[III.]{Physical time is a real parameter and is denoted by $t\in \mathbb{R}$. All physical quantities of $S$ potentially measurable/observable during the time interval $[s,t] \subset \mathbb{R}$ generate an algebra
$\mathcal{E}_{[s,t]}$.
It is natural to assume that if $[s',t'] \subset [s,t]$, i.e., $s\leq s', t\geq t'$, then
\begin{equation}\label{nesting}
\mathcal{E}_{[s',t']} \subseteq \mathcal{E}_{[s,t]} \subseteq \mathcal{E}.
\end{equation}
This property allows us to define algebras
\begin{equation}\label{after t}
\mathcal{E}_{\geq t} := \overline{\bigvee_{t':t<t'<\infty} \mathcal{E}_{[t,t']}}, \qquad \text{for  }t\in \mathbb{R},
\end{equation}
where the closure is taken to be in the operator norm of $\mathcal{E}$. The algebra $\mathcal{E}_{\geq t}$ is the algebra generated by all events potentially detectable at times $\geq t$. By property \eqref{nesting}, we have that 
\begin{equation}\label{telescope}
\mathcal{E} \supseteq \mathcal{E}_{\geq t} \supseteq \mathcal{E}_{\geq t'} \supset \mathcal{E}_{[t',t'']},
\end{equation}
whenever $t<t' <t'' (< \infty)$.
}
\item[IV.]{An \textit{``instrument''} of $S$ is an abstract abelian $C^{*}$- algebra, $\mathcal{I}$, of operators. For the purposes of this paper, we may assume that every instrument $\mathcal{I}$ is generated by a \textit{finite family of commuting orthogonal projections}, $\lbrace \Pi_{\xi} \vert \xi \in \mathcal{X} \rbrace$, where $\mathcal{X}$ is a finite set.\footnote{more generally, $\mathcal{X}$ is assumed to be a compact ``Polish space''} Typically, $\mathcal{I}$ may be thought to be generated by a family of commuting selfadjoint operators representing abstract physical quantities of $S$. It is an integral part of the definition of a system $S$ to specify the list, 
\begin{equation}\label{instrument}
\mathcal{O}_{S}:=\lbrace \mathcal{I}_{\ell} \rbrace_{\ell \in \mathcal{L}_{S}},
\end{equation}
of all instruments of $S$, where $\mathcal{L}_{S}$ labels the different instruments of $S$.\\
We remark that our notion of ``instruments'' deviates from more standard notions used in the literature. However, it is our notion that will be useful for the purposes of this paper.\\
We assume that, given a time $t \in \mathbb{R}$ and an arbitrary instrument 
$\mathcal{I} = \lbrace \Pi_{\xi}\vert \xi \in \mathcal{X} \rbrace \in \mathcal{O}_{S}$, there exists a representation of $\mathcal{I}$ in the algebra 
$\mathcal{E}_{\geq t}$, 
\begin{equation}\label{reps}
\mathcal{I} \ni \Pi_{\xi} \mapsto \Pi_{\xi}(t) \in \mathcal{E}_{\geq t},
\end{equation}
with $\Pi_{\xi}(t)=\Pi_{\xi}(t)^{*}$ and $\Pi_{\xi}(t)^{2}=\Pi_{\xi}(t)$, for all $\xi \in \mathcal{X}$.
}
\end{itemize}

\textit{States} of the system $S$ are normalized, positive linear functionals on the algebra $\mathcal{E}$; i.e., a state, $\omega$, of $S$ associates with every operator $A\in \mathcal{E}$ its expectation value, 
$\omega(A)$, in $\omega$, with the properties
\begin{itemize}
\item{$\omega$ is a linear functional on $\mathcal{E}$} with values in the complex numbers
\item{$\omega(A) \geq 0$ whenever $A >0$}
\item{$\underset{\lbrace A\in \mathcal{E}\vert \Vert A \Vert \leq 1\rbrace}{\text{sup}} \vert \omega(A) \vert = 1$ }
\end{itemize}
Given a state $\omega$ on $\mathcal{E}$, there exists a Hilbert space $\mathcal{H}_{\omega}$, a unit vector $\Omega \in \mathcal{H}_{\omega}$, and a representation, $\pi_{\omega}$, of the algebra 
$\mathcal{E}$ on $\mathcal{H}_{\omega}$ such that
\begin{equation}\label{GNS}
\omega(A)=\langle \Omega, \pi_{\omega}(A) \Omega \rangle_{\mathcal{H}_{\omega}},
\end{equation}
where $\langle \cdot, \cdot \rangle_{\mathcal{H}_{\omega}}$ is the scalar product on 
$\mathcal{H}_{\omega}$.
This is the content of the so-called Gel'fand-Naimark-Segal construction; (see Appendix A and \cite{BrRo}).

In the following, we may always assume that the algebra $\mathcal{E}$ is represented as an algebra of bounded linear operators acting on a separable Hilbert space $\mathcal{H}_{S}$,\footnote{The space $\mathcal{H}_{S}$ may be thought of as the direct sum of the GNS Hilbert spaces $\mathcal{H}_{\omega}$ over ``physically relevant'' states $\omega$ on $\mathcal{E}$}
 and that all physically relevant states $\omega$ on $\mathcal{E}$ are given by \textit{density matrices}, $P_{\omega}$, on $\mathcal{H}_{S}$ (i.e., positive operators on $\mathcal{H}_{S}$ with trace $=1$), with 
 $$\omega(A)=\text{tr}(P_{\omega}\,A), \quad A \in \mathcal{E}.$$

We then define the von Neumann algebras
\begin{eqnarray}\label{vNalg}
\mathcal{E}^{-} & := & \overline{\mathcal{E}}^{w}, \nonumber \\
\mathcal{E}_{\geq t}^{-} & := & \overline{\mathcal{E}_{\geq t}}^{w},
\end{eqnarray}
where the closures on the right sides of \eqref{vNalg} are taken in the weak operator topology defined on the algebra, $B(\mathcal{H}_{S})$, of all bounded operators on the Hilbert space 
$\mathcal{H}_{S}$. Then 
$$ \mathcal{E}_{\geq t}^{-} \subseteq \mathcal{E}^{-} \subseteq B(\mathcal{H}_{S}), \qquad \forall \text{ times  }t \in \mathbb{R}.$$
The following notions all depend on our choice of the representation of $\mathcal{E}$ on $\mathcal{H}_{S}$.

\textit{Definition} (``Algebra of asymptotic observables'').
\begin{equation}\label{obs-at-infty}
\mathcal{E}_{\infty}:= \bigcap_{t \in \mathbb{R}} \mathcal{E}_{\geq t}^{-}.
\end{equation}

A key property in the theory of direct (projective) measurements is the following property of \textit{``Loss of Access to Information''}: We consider systems $S$ (isolated, but \textit{open}!) with the property that
\begin{equation}\label{LOI}
\boxed{ \mathcal{E}_{\geq t}^{-} \underset{\not=}{\supset} \mathcal{E}_{\geq t'}^{-}, \quad \text{ for  } t'>t}
\end{equation}

The following remarks are intended to illuminate the importance of Property \eqref{LOI}; (but they do not cover some of the more subtle aspects of the theory of direct measurements). Because of \eqref{LOI} it can and, in general, will happen that the state $\omega_{t}$ on the algebra $\mathcal{E}_{\geq t}^{-}$, defined by
\begin{equation}\label{restr-state}
\omega_{t}:= \omega \vert_{\mathcal{E}_{\geq t}^{-}}, \quad \omega_{t}(A) = \omega(A), \quad \forall A \in \mathcal{E}_{\geq t}^{-},
\end{equation}
is a \textit{mixed} state \textit{even} if $\omega$ is \textit{pure} as a state on the algebra $\mathcal{E}$. In that case, there are states $\omega^{(1)}$ and $\omega^{(2)}$ on the algebra $\mathcal{E}_{\geq t}^{-}$, with 
$\omega^{(1)}\not= \omega^{(2)}$, and a positive number $\lambda < 1$ such that
$$\omega(A)=\lambda \omega^{(1)}(A) + (1-\lambda) \omega^{(2)}(A), \quad \forall A \in \mathcal{E}_{\geq t}^{-}.$$

Under very general hypotheses, one can determine the transition probability for a transition from the state 
$\omega_{t}$ on the algebra $\mathcal{E}_{\geq t}^{-}$ to the state $\omega^{(1)}$. Let us denote this probability by $p(\omega_{t}, \omega^{(1)})$. Given that the system has been prepared in state $\omega$ at some time \textit{earlier} than $t$, it will happen with probability $p(\omega_{t}, \omega^{(1)})$ that the state 
$\omega^{(1)}$ yields more accurate predictions about the behavior of the system $S$ at times $\geq t$. (The optimal choice of the state $\omega^{(1)}$ depends on the events happening in the system $S$ after the preparation of $S$ in the state $\omega$ and before time $t$.) Assuming that we have reasons to expect that $\omega^{(1)}$ is the state that describes the behavior of $S$ at times $\geq t$ most accurately, we may ask what this state predicts about the appearance of events in $S$ at times $\geq t$.
Well, there could exist an instrument \mbox{$\mathcal{I}=\lbrace \Pi_{\xi} \vert \xi \in \mathcal{X} \rbrace \in \mathcal{O}_{S}$} such that 
\begin{equation}\label{event}
\omega^{(1)}(A)= \sum_{\xi \in \mathcal{X}}\omega^{(1)}(\Pi_{\xi}(t)\, A \, \Pi_{\xi}(t)), \quad \forall A \in 
\mathcal{E}_{\geq t}^{-},
\end{equation}
up to possibly a tiny error. Under certain conditions on the projections $\Pi_{\xi}(t), \xi \in \mathcal{X}$, that we do not wish to repeat here\footnote{namely that these operators belong to (or are very close in norm to) the ``centre of the centralizer'' of the state 
$\omega^{(1)}$ with respect to the algebra $\mathcal{E}_{\geq t}^{-}$, see \cite{FS, Juerg}} it then follows that if $S$ is prepared in the state $\omega^{(1)}$ then, around time $t$, the instrument $\mathcal{I}$ is triggered and displays the value $\xi$ with a probability given by \textit{Born's Rule}, i.e.,
\begin{equation}\label{Born}
\text{Prob}_{\omega^{(1)}}\lbrace \xi \rbrace = \omega^{(1)}(\Pi_{\xi}(t)), \quad \forall \xi \in \mathcal{X}.
\end{equation}
Let us assume that the system $S$ has been prepared in state $\omega$ some time before an event is detected around time $t$. Assuming that the instrument $\mathcal{I}$ would \textit{not} be triggered around time $t$ if $S$ had been prepared in state 
$\omega^{(2)}$ then the probability that $\mathcal{I}$ is not triggered, given that $S$ has been prepared in state $\omega$, is given by $1-p(\omega_{t}, \omega^{(1)})$, while the probability that $\mathcal{I}$ is triggered around time $t$ and displays the value 
$\xi$ is given by 
\begin{equation}\label{Born's Rule}
\text{Prob}\lbrace \text{event  }\Pi_{\xi} \text{  appears around time  } t \rbrace = p(\omega_{t}, \omega^{(1)}) \cdot \omega^{(1)}\big(\Pi_{\xi}(t)\big).
\end{equation}
 For a more complete exposition of the theory of direct measurements in the ``ETH approach'' to quantum mechanics we refer the reader to \cite{FS, BlFrSc, Juerg}.
In this paper, we consider a very simple special case of the general theory, which relies on the following idealizations.

\textit{A simple model of a physical system} $S$:

We consider a quantum-mechanical model of a physical system $S$ exhibiting the following features: $S$ has a single instrument $\mathcal{I}=\lbrace \Pi_{\xi} \vert \xi \in \mathcal{X} \rbrace, \text{card}\mathcal{X} < \infty$, with
\begin{equation}\label{partition of 1}
\sum_{\xi \in \mathcal{X}} \Pi_{\xi}= {\bf{1}}.
\end{equation}
 Furthermore, we assume that, for a large family of physically interesting states of $S$,
the instrument $\mathcal{I}$ is triggerd at infinitely many times $t_1 < t_2 < \cdots < t_n < \cdots$; (this sequence of times might, of course, depend on the initial state of $S$). Whenever $\mathcal{I}$ is triggered it displays a value $\xi \in \mathcal{X}$, in the sense of Eq. \eqref{event}. 

We introduce the space
\begin{equation}\label{historyspace}
\Xi : = \mathcal{X}^{\times \mathbb{N}}.
\end{equation}
Points in $\Xi$, henceforth called ``histories (of events)'', are denoted by $\underline{\xi} := (\xi_1, \xi_2, \cdots)$, and 
$\Xi$ is called the ``space of histories''.  This space is equipped with the $\sigma$- algebra, $\Sigma$, generated by cylinder sets; (a cylinder set, $\mathcal{C}$, in $\Xi$ is a set of the form  $\mathcal{C}=\Lambda \times \Xi \subset \Xi $, with a base $\Lambda \subseteq \mathcal{X}^{\times n}$). The space of bounded $\Sigma$-measurable functions on $\Xi$ is denoted by $L^{\infty}(\Xi)$. 


We define operators acting on $\mathcal{H}_{S}$ associated with histories of events as follows:
\begin{align}\label{Pi's}
\Pi_{ \underline{\xi}^{(m,n)}  } : =  \Pi_{\xi_m}(t_m) \cdots   \Pi_{\xi_n}(t_n), m<n, \qquad 
\Pi_{ \underline{\xi}^{(n)}  } : =  \Pi_{ \underline{\xi}^{(1,n)}  },
\end{align} 
where  $ \underline{\xi}^{(m,n)} := (\xi_{m}, \xi_{m+1}, \cdots, \xi_{n})  $  labels a stretch of a history 
$\underline{\xi}$ between the $m^{th}$ and the $n^{th}$ event, with $n>m$, and $\underline{\xi}^{(n)}:= \underline{\xi}^{(1,n)}.$ Since the times $t_i, i=1,2,\dots,$ when the instrument $\mathcal{I}$ is triggered do not matter in the following, we will not display them explicitly, anymore; (but we emphasize, once more, that they might depend on the state which the system is prepared in!).


\begin{assumption}[Decoherence $ \& $ asymptotic abelianness] \label{Decoherence}
The following two properties hold:
\begin{itemize}
\item {\bf Ideal Decoherence:} For every $i \in \{m, \cdots, n \}, 1\leq m < n <\infty$,
\begin{align}\label{deco}
 \hspace{.3cm} \sum_{\xi_i \in \mathcal{X} } &\Pi_{ \underline{\xi}^{(m,n)}  } 
 \big ( \Pi_{ \underline{\xi}^{(m,n)}  } \big )^{*}  = 
 \Pi_{ \underline{\xi}^{(m,i-1)}  }  \Pi_{ \underline{\xi}^{(i+1, n)}  }   
 \big ( \Pi_{ \underline{\xi}^{(m,i-1)}  }  \Pi_{ \underline{\xi}^{(i+1, n)}  } \big )^{*}, 
\end{align}
as an identity between bounded operators acting on $\mathcal{H}_{S}$, with $\Pi_{\underline{\xi}^{(m,k)}} := {\bf{1}}$, for $m>k$.
\item {\bf Asymptotic abelianess:} $ \mathcal{E}_{\infty}$ is contained in the center of $\mathcal{E}^{-}$, (meaning that $\mathcal{E}_{\infty}$ is abelian and operators in $\mathcal{E}_{\infty}$ commute with all operators in $\mathcal{E}^{-}$).
\end{itemize}

\end{assumption}

Next, we note that a state $\omega $ on $\mathcal{E}^{-}$ determines a probability measure 
$\mu_{\omega}$ on $\Xi$: The quantity
\begin{align}\label{mu}
 \mu_{\omega} ( \underline{\xi}^{(m,n)}) : =  \omega \Big (\Pi_{ \underline{\xi}^{(m,n)}  } 
 \big ( \Pi_{ \underline{\xi}^{(m,n)}  } \big )^{*} \Big ) = \sum_{\underline{\xi}^{(m-1)}} \mu_{\omega}(\underline{\xi}^{(n)})
\end{align}
is the probability of measuring $ \xi_i $ at time $t_i$, for every $i = m, m+1, \dots, n$, with $1\leq m < n < \infty$.
Eq. \eqref{partition of 1} implies that
\begin{equation}\label{partition-of-unity}
\sum_{\xi_{n} \in \mathcal{X}} \Pi_{\underline{\xi}^{n}} \big(\Pi_{\underline{\xi}^{n}}^{*}\big) = \Pi_{\underline{\xi}^{n-1}} \big(\Pi_{\underline{\xi}^{n-1}}^{*}\big) 
\end{equation}
and this guarantees that $\mu_{\omega}$ can be extended to a measure defined on the $\sigma$- algebra 
$\Sigma$; (this follows from the Kolmogorov extension theorem). The measure $\mu_{\omega}$ is well defined whenever $\omega$ is an arbitrary \textit{normal, positive} linear functional on $\mathcal{E}^{-}$, but is not necessarily normalized (i.e., if $\omega$ is a positive multiple of a normal state on $\mathcal{E}^{-}$). 

A measurable set $\Delta \in \Sigma$ is a \textit{``zero-set''} iff $$\mu_{\omega}(\Delta)=0,$$
for an arbitrary normal state $\omega$ on $\mathcal{E}^{-}$. The specification of zero-sets determines what we call a \textit{``measure class''}. The choice of a representation of the algebra $\mathcal{E}$ on a Hilbert space $\mathcal{H}_{S}$ thus determines a measure class. Two measurable functions, $f$ and $g$, on the space $\Xi$ of histories will henceforth be identified iff $$f-g \text{  vanishes almost everywhere},$$
i.e., iff a set $\Delta$ with the property that $f(\underline{\xi})-g(\underline{\xi}) \not= 0, \text{  for any  }\underline{\xi} \in \Delta$ is a zero-set with respect to the measure class determined by the representation of $\mathcal{E}$ on $\mathcal{H}_{S}$.

As already mentioned, the notion of \textit{``tail events''} turns out to be fundamental in our analysis. Here is a precise definition of tail events:  For every $n \in \mathbb{N}$, we denote by $\Sigma_n$ the $\sigma$-algebra on $\Xi$ generated by all sets of the form $\mathcal{X}^{\times n} \times \mathcal{C}$, where $\mathcal{C}$ is a cylinder set in $\Xi$. Furthermore,  we define the \textit{``tail $\sigma$-algebra''}
\begin{align}\label{sigma-infty}
\Sigma_{\infty} := \bigcap_{n\in \mathbb{N}} \Sigma_{n},  
\end{align} 
which, intuitively, consists of all measurable sets, $\Delta$, of histories with the property that if a point 
$\underline{\xi}$ belongs to $\Delta$ and if $\underline{\eta}$ is a history with $\eta_{i}= \xi_{i}$, for all but finitely many $i$, then $\underline{\eta}$ belongs to $\Delta$, too. 
We also use the notation $\Sigma_{m,n}$, $n>m$, for the algebra generated by all the sets $\Delta$ such that $\underline{\xi}, \underline{\eta} \in \Delta $ iff $\xi_{m}=\eta_{m}$, ..., $\xi_{n}=\eta_{n}$.   \\
Complex-valued, bounded functions $f: \Xi \to \mathbb{C}$ measurable with respect to $ \Sigma_{\infty} $ have the property that, almost surely, their values do not depend on any finite number of measurement outcomes in a history; (see Definition \ref{AOI}). Measurable functions on $\Xi$ that are measurable with respect to $\Sigma_{\infty}$ form an algebra, which we henceforth denote by $\mathcal{L}_{\infty}$.

\subsection{New results} \label{NI}

In a previous paper \cite{BaFrFrSc}, we have shown that time-independent properties of a physical system that can be reconstructed from long sequences of repeated direct measurements, using the instrument 
$\mathcal{I}$, can be identified with functions on $\Xi$ that belong to the algebra $\mathcal{L}_{\infty}$ just defined.

A key result proven in the present paper is that, to each pair, $(\omega, f)$, of a normal state $\omega$ on 
$\mathcal{E}^{-}$ and a positive measurable function $f \in \mathcal{L}_{\infty}$, we can associate a new state, denoted by $\omega_{f}$, on $\mathcal{E}^{-}$; see Theorem \ref{Este}, below. 
Our proof involves the construction of a representation of the algebra $\mathcal{L}_{\infty}$ of functions measurable at infinity  in the algebra $\mathcal{E}_{\infty}$ of ``asymptotic observables'' -- see Theorem \ref{rep}. This result enables us to construct a $\Sigma_{\infty}$- ergodic disintegration of the measure 
$\mu_{\omega}$ as a convex combination of mutually singular extremal measures, almost everyone of which corresponds to a normal state on $\mathcal{E}^{-}$; (see Theorem \ref{decom} and Corollary \ref{CMA}). The states that give rise to extremal measures are eigenstates of certain asymptotic observables, (i.e., of certain operators in $\mathcal{E}_{\infty}$). On the support of these extremal measures, functions measurable at infinity are almost surely constant.

%
%
 
\subsubsection{The algebra $\mathcal{L}_{\infty}$ and its image in $\mathcal{E}_{\infty}$} \label{Tail-Obs}

In this subsection we construct a representation, $\Phi \equiv \Phi_{\mathcal{I}}$, of the algebra $\mathcal{L}_{\infty}$ of functions measurable at infinity with values 
 in the algebra $\mathcal{E}_{\infty}$ of asymptotic observables. The map $\Phi: \mathcal{L}_{\infty} \rightarrow \mathcal{E}_{\infty}$
assigns to every function $f \in \mathcal{L}_{\infty}$ an operator  $\Phi(f) \in \mathcal{E}_{\infty}$; see Theorem \ref{rep}, (our main result in this subsection). We will actually define a linear map $\Phi: L^{\infty}(\Xi) \ni f \mapsto \Phi(f) \in \mathcal{E}^{-}$ assigning to every bounded \mbox{$\Sigma$-measurable} function $f$ on $\Xi$ an operator $\Phi(f) \in \mathcal{E}^{-}$. The restriction of this map to 
$\Sigma_{\infty}$-measurable functions will turn out to be the representation of $\mathcal{L}_{\infty}$ in $\mathcal{E}_{\infty}$ we are looking for.  
   
In this subsection we only describe our results; proofs are deferred to later sections.\\
    
For every $\Sigma$-measurable bounded function $f$, the operator $\Phi(f) \equiv \Phi_{\mathcal{I}}(f) \in \mathcal{E}^{-}$ is defined by the equation 
\begin{align}\label{ok0}
  \omega ( \Phi(f) )  : = \int_{\Xi} f(\underline{\xi}) d \mu_{\omega}(\underline{\xi}) , 
\end{align}
for every normal state $\omega$,  and the definition is extended by linearity to all elements of the pre-dual of $\mathcal{E}^-$. (We recall that all linear functionals on a $C^*$-algebra can be uniquely decomposed into a sum of four positive functionals, see \cite[Theorem~3.3.10]{Mur}). For every bounded $\Sigma$-measurable function $f$ that only depends on finitely many events $\xi_1,\dots, \xi_n$, for some $n<\infty$, of a history 
$\underline{\xi}$, there is an explicit formula for the operator $\Phi(f) $: 
\begin{equation}\label{def of Phi}
\Phi(f):= 
\sum_{\underline{\xi} \in \Xi} f(\underline{\xi})\, \,\Pi_{\underline{\xi}^{(n)}} \big(\Pi_{\underline{\xi}^{(n)}}\big)^{*},
\end{equation}
see formulae \eqref{Pi's} -- \eqref{mu}. Note that if $f\geq 0$ then $\Phi(f)\geq 0$, as an operator, and
$$\Vert \Phi(f) \Vert \leq \Vert f \Vert_{\infty}.$$
Basic properties of the map $\Phi(f)$ are summarized in the following lemma, which will be proven in Section \ref{PHIF}.

\begin{lemma}\label{Phi}
For every $\Sigma-$measurable function $f$, $\Phi(f)$ belongs to $\mathcal{E}^{-}$.  Moreover, the map 
\begin{equation}
\Sigma \ni \Delta \mapsto \Phi(\Delta) \equiv \Phi(\chi_{\Delta})
\end{equation}
is a positive $\sigma$-additive operator-valued measure (POVM).  
\end{lemma}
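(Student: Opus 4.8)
The plan is to establish the two claims of the lemma in the natural order: first that $\Phi(f) \in \mathcal{E}^-$ for every bounded $\Sigma$-measurable $f$, and then that $\Delta \mapsto \Phi(\chi_\Delta)$ is a $\sigma$-additive POVM. For the first claim, I would begin with the case of functions depending on only finitely many events $\xi_1,\dots,\xi_n$, where the explicit formula \eqref{def of Phi} applies. Each $\Pi_{\underline{\xi}^{(n)}}(\Pi_{\underline{\xi}^{(n)}})^{*}$ is a product of operators of the form $\Pi_{\xi_i}(t_i) \in \mathcal{E}_{\geq t_1} \subseteq \mathcal{E}$, together with adjoints, hence lies in $\mathcal{E}$ and a fortiori in $\mathcal{E}^-$; a finite linear combination of such operators is therefore in $\mathcal{E}^-$. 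For a general bounded $\Sigma$-measurable $f$, I would approximate $f$ uniformly by cylinder functions $f_n$ (conditional expectations of $f$ onto $\Sigma_{1,n}$, or simple functions built from cylinder sets), note that $\|\Phi(f_n) - \Phi(f_m)\| \le \|f_n - f_m\|_\infty$ by the norm bound stated just before the lemma, so $(\Phi(f_n))$ is Cauchy in the operator norm of $\mathcal{E}^-$ and converges to some operator $T \in \mathcal{E}^-$ (the latter being norm-closed). It then remains to identify $T$ with $\Phi(f)$ as defined by \eqref{ok0}: for every normal state $\omega$, $\omega(\Phi(f_n)) = \int f_n \, d\mu_\omega \to \int f \, d\mu_\omega = \omega(\Phi(f))$ by dominated convergence (using $\|f_n\|_\infty \le \|f\|_\infty$ and $f_n \to f$ pointwise $\mu_\omega$-a.e., or in measure), while also $\omega(\Phi(f_n)) \to \omega(T)$ by norm convergence; since normal states separate $\mathcal{E}^-$, we get $T = \Phi(f)$, so $\Phi(f) \in \mathcal{E}^-$.

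For the POVM claim, positivity is immediate: $\chi_\Delta \ge 0$, hence $\Phi(\Delta) \ge 0$ by the remark preceding the lemma (or directly: $\omega(\Phi(\chi_\Delta)) = \mu_\omega(\Delta) \ge 0$ for all normal $\omega$). Normalization $\Phi(\Xi) = \mathbf{1}$ follows from $\omega(\Phi(\chi_\Xi)) = \mu_\omega(\Xi) = \omega(\mathbf{1})$ — here one uses \eqref{partition of 1} and \eqref{partition-of-unity}, which give $\mu_\omega(\Xi) = \omega(\mathbf{1}) = 1$ for a normal state and the analogous identity for general normal positive functionals. Finite additivity on disjoint sets is clear from linearity of $\Phi$ and of the integral in \eqref{ok0}. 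The substantive point is $\sigma$-additivity: given pairwise disjoint $\Delta_k \in \Sigma$ with union $\Delta$, I would show $\Phi(\Delta) = \sum_k \Phi(\Delta_k)$ in the appropriate operator topology. Testing against a normal state $\omega$, $\omega(\Phi(\Delta)) = \mu_\omega(\Delta) = \sum_k \mu_\omega(\Delta_k) = \sum_k \omega(\Phi(\Delta_k))$ by countable additivity of the measure $\mu_\omega$. Since the partial sums $S_N := \sum_{k \le N} \Phi(\Delta_k)$ form a monotone increasing sequence of positive operators bounded above by $\Phi(\Delta) \le \mathbf{1}$, they converge strongly (indeed in the weak operator topology suffices for the POVM notion) to some positive $S \le \Phi(\Delta) \in \mathcal{E}^-$; and $\omega(S) = \lim_N \omega(S_N) = \omega(\Phi(\Delta))$ for every normal state $\omega$ forces $S = \Phi(\Delta)$, giving $\sigma$-additivity in the weak (equivalently strong, by monotonicity) operator topology.

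The main obstacle I anticipate is the careful handling of the two-step definition of $\Phi$: the explicit formula \eqref{def of Phi} is only given for cylinder functions, while \eqref{ok0} defines $\Phi(f)$ implicitly via duality with normal states, and one must check these agree and that the implicit definition actually produces an operator in $\mathcal{E}^-$ rather than merely a bounded functional on the predual. The approximation argument above is the bridge, but it requires knowing that cylinder functions are norm-dense enough — more precisely, that for any bounded $\Sigma$-measurable $f$ one can find cylinder functions $f_n$ with $\int |f - f_n| \, d\mu_\omega \to 0$ for all normal $\omega$ simultaneously (uniform integrability across the measure class), or alternatively one restricts attention to a countable generating set and invokes separability of $\mathcal{H}_S$. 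A clean way around this is to define $\Phi$ first on cylinder functions via \eqref{def of Phi}, extend by norm-continuity to the $\|\cdot\|_\infty$-closure of cylinder functions (which is all bounded $\Sigma$-measurable functions, $\Sigma$ being generated by cylinders), and only afterwards verify that the resulting operator satisfies \eqref{ok0}; this sidesteps convergence issues in the predual. The $\sigma$-additivity step then reduces, as sketched, to $\sigma$-additivity of each scalar measure $\mu_\omega$ together with the order-continuity of the trace/normal states on $\mathcal{E}^-$, which is standard.
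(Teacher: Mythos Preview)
Your argument for the POVM properties (positivity, normalization, $\sigma$-additivity via countable additivity of each $\mu_\omega$ and monotone convergence of the partial sums in the weak operator topology) is correct and is essentially what the paper does.

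The gap is in your proof that $\Phi(f)\in\mathcal{E}^-$. You propose to approximate a general bounded $\Sigma$-measurable $f$ \emph{uniformly} by cylinder functions and then use $\|\Phi(f_n)-\Phi(f_m)\|\le\|f_n-f_m\|_\infty$ to get norm convergence in $\mathcal{E}^-$. But cylinder functions are \emph{not} $\|\cdot\|_\infty$-dense in $L^\infty(\Xi)$: cylinder functions are continuous on the compact product space $\Xi$, so their uniform closure consists of continuous functions, whereas $\chi_\Delta$ for a generic $\Delta\in\Sigma$ (for instance any $\Delta\in\Sigma_\infty$ that is not trivial) is not continuous. Your ``clean way around'' repeats the same error: the fact that cylinders generate $\Sigma$ gives approximation in measure or in $L^1(\mu_\omega)$, never in $L^\infty$. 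Conditional expectations onto $\Sigma_{1,n}$ likewise converge only a.e.\ and in $L^1$ by the martingale theorem, not uniformly.

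The route that actually works is the one you mention in passing and then set aside: pick a dense sequence of vector states (using separability of $\mathcal{H}_S$), approximate $\chi_\Delta$ by cylinder-set indicators simultaneously in all the $L^1(\mu_{\omega^{(m)}})$, deduce $\omega(\Phi(|f_n-f|))\to 0$ first for these and then for every normal $\omega$ by density, and conclude that $\Phi(f_n)\to\Phi(f)$ in the weak operator topology. Since $\mathcal{E}^-$ is weakly closed, this gives $\Phi(f)\in\mathcal{E}^-$. This is exactly the content of the paper's auxiliary approximation lemma (Lemma~\ref{lem:aux}), on which the paper's proof of Lemma~\ref{Phi} rests. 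So your instinct about the obstacle was right; the resolution is to accept weak rather than norm convergence.
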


\begin{definition}[Algebra of asymptotic observables associated with an instrument]\label{AOI}
The algebra, $\mathcal{O}_{\infty}$, of asymptotic observables associated with the instrument $\mathcal{I}$ is the subalgebra of $\mathcal{E}^{-}$ generated by the operators $\Phi(f), f\in \mathcal{L}_{\infty}$, i.e.,
\begin{align}\label{Oinfty}
\mathcal{O}_{\infty} := \langle \Phi_{\mathcal{I}}(f) \vert f \in \mathcal{L}_{\infty} \rangle
\end{align}
\end{definition} 
Using Eq. \eqref{deco} (``Ideal Decoherence''), we see that $\mathcal{O}_{\infty}$ is actually contained in or equal to the algebra $\mathcal{E}_{\infty}$ of asymptotic observables.

 These considerations are summarized in the following theorem, which is one of our main results.

\begin{theorem}[Representation of $\mathcal{L}_{\infty}$ in $\mathcal{E}_{\infty}$]\label{rep}
 The map $$ \mathcal{L}_{\infty} \ni  f \mapsto \Phi(f) \in \mathcal{O}_{\infty} \subseteq \mathcal{E}_{\infty}$$ defines a representation
 (i.e., a homomorphism of $^{*}$algebras) of the algebra $\mathcal{L}_{\infty}$ of functions measurable at infinity with values in the algebra $\mathcal{E}_{\infty}$ of asymptotic observables.
\end{theorem}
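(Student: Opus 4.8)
The plan is to establish that $\Phi$, restricted to $\mathcal{L}_\infty$, is linear, $^*$-preserving, and multiplicative, with the range landing inside $\mathcal{E}_\infty$. Linearity is immediate from the defining equation \eqref{ok0}, since $f \mapsto \int_\Xi f\, d\mu_\omega$ is linear for each fixed normal $\omega$, and a bounded operator is determined by its expectation values in all normal states. The $^*$-property $\Phi(\bar f) = \Phi(f)^*$ follows similarly: for real-valued $f$ the integral $\int f\, d\mu_\omega$ is real for every normal state $\omega$, hence $\Phi(f)$ is self-adjoint, and one then extends by complex linearity. The containment $\Phi(\mathcal{L}_\infty) \subseteq \mathcal{E}_\infty$ is asserted in the text just before the theorem as a consequence of Ideal Decoherence: the point is that for $f$ depending only on $\xi_1,\dots,\xi_n$ the explicit formula \eqref{def of Phi} shows $\Phi(f) \in \mathcal{E}^-$, but for $f \in \mathcal{L}_\infty$ one can, for each $N$, rewrite $\Phi(f)$ using a version of \eqref{def of Phi} ``shifted'' to start at time $t_{N+1}$ — here one uses that $\Sigma_\infty$-measurability lets $f$ be approximated (or exactly represented, via \eqref{deco}) by functions insensitive to $\xi_1,\dots,\xi_N$ — so that $\Phi(f) \in \mathcal{E}_{\geq t_{N+1}}^-$ for all $N$, whence $\Phi(f) \in \bigcap_t \mathcal{E}_{\geq t}^- = \mathcal{E}_\infty$.

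The substantive step is multiplicativity: $\Phi(fg) = \Phi(f)\Phi(g)$ for $f,g \in \mathcal{L}_\infty$. First I would prove it for the dense subclass of \emph{cylinder} functions — $f$ depending only on $\xi_1,\dots,\xi_n$ and $g$ on $\xi_1,\dots,\xi_m$ — using formula \eqref{def of Phi} together with the Ideal Decoherence identity \eqref{deco}. The key algebraic fact is that $\Pi_{\underline\xi^{(n)}}(\Pi_{\underline\xi^{(n)}})^*$, for varying $\underline\xi^{(n)}$, behave like a (generally non-orthogonal, because of decoherence) partition of unity whose ``marginalization'' relations \eqref{partition-of-unity} and \eqref{deco} make the product of two such sums collapse correctly; concretely, for $f,g$ cylinder functions the product $\Phi(f)\Phi(g)$, expanded via \eqref{def of Phi} and reduced using \eqref{deco}, should equal $\sum_{\underline\xi} f(\underline\xi)g(\underline\xi)\,\Pi_{\underline\xi^{(k)}}(\Pi_{\underline\xi^{(k)}})^*$ with $k = \max(m,n)$, which is exactly $\Phi(fg)$. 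I would then pass to general $f,g \in \mathcal{L}_\infty$ by a density/continuity argument: $\Phi$ is norm-contractive ($\|\Phi(f)\| \le \|f\|_\infty$), cylinder functions measurable at infinity are dense in $\mathcal{L}_\infty$ in an appropriate sense (more precisely, one approximates using conditional expectations onto $\Sigma_{n,\infty}$-type subalgebras), and multiplicativity is preserved under the relevant limits — using weak-operator continuity of $\Phi$ against normal states, which is built into \eqref{ok0}.

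The main obstacle I expect is the reduction for products of cylinder functions: one must carefully bookkeep the decoherence identity \eqref{deco}, which holds for summing over a \emph{single} intermediate index $\xi_i$, and iterate it to collapse the ``overlap'' region between the two histories. In particular, $\Pi_{\underline\xi^{(n)}}(\Pi_{\underline\xi^{(n)}})^* \cdot \Pi_{\underline\eta^{(m)}}(\Pi_{\underline\eta^{(m)}})^*$ is \emph{not} obviously diagonal in $\underline\xi, \underline\eta$, and it is precisely \eqref{deco}, applied repeatedly to sum out the mismatched indices $\eta_1,\dots,\eta_m$ on the right against the structure on the left (or vice versa), that forces the cross terms to vanish and leaves only the diagonal $\underline\xi = \underline\eta$ contribution on the common block. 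A secondary subtlety is the approximation argument for non-cylinder $f \in \mathcal{L}_\infty$: one needs that $f$ measurable at infinity is, modulo zero-sets of the measure class, a limit of functions that are simultaneously cylinder \emph{and} depend only on tail coordinates — this is where the definition \eqref{sigma-infty} of $\Sigma_\infty$ as an intersection, combined with a martingale-type convergence of conditional expectations $\mathbb{E}[f \mid \Sigma_{1,n}]$, does the work, and one must check this convergence is compatible with the measure class (independence of the chosen normal state $\omega$).
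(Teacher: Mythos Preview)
Your multiplicativity argument has a genuine gap. The identity $\Phi(f)\Phi(g) = \Phi(fg)$ is \emph{false} for general cylinder functions: take $f = g = \chi_{\{\xi_1 = a,\, \xi_2 = b\}}$, so that $\Phi(f) = \Pi_a(t_1)\Pi_b(t_2)\Pi_a(t_1)$. Then $\Phi(f)^2 = \Pi_a\Pi_b\Pi_a\Pi_b\Pi_a$, whereas $\Phi(f^2) = \Phi(f) = \Pi_a\Pi_b\Pi_a$; these agree only when $\Pi_a(t_1)\Pi_b(t_2)\Pi_a(t_1)$ happens to be a projection, which fails whenever $\Pi_a(t_1)$ and $\Pi_b(t_2)$ do not commute. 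The decoherence identity \eqref{deco} does not help here: it lets you \emph{sum out} an intermediate index $\xi_i$ inside a single string $\Pi_{\underline\xi^{(m,n)}}(\Pi_{\underline\xi^{(m,n)}})^*$, but in the product $\Phi(f)\Phi(g)$ the two history strings sit side by side, and after the forced identification $\eta_1 = \xi_1$ (from $\Pi_{\xi_1}(t_1)\Pi_{\eta_1}(t_1) = \delta_{\xi_1\eta_1}\Pi_{\xi_1}(t_1)$) one is left with terms like $\cdots\Pi_{\xi_2}(t_2)\Pi_{\xi_1}(t_1)\Pi_{\eta_2}(t_2)\cdots$ to which \eqref{deco} does not apply. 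This is why Lemma~\ref{Phi} only asserts that $\Phi$ is a POVM on $L^\infty(\Xi)$, not a projection-valued measure.

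The paper's proof takes a different route whose essential ingredient you never invoke: the \emph{second} half of Assumption~\ref{Decoherence}, asymptotic abelianness, i.e.\ $\mathcal{E}_\infty \subseteq Z(\mathcal{E}^-)$. One first uses Lemma~\ref{lem:aux} and decoherence to show $\Phi(f) \in \mathcal{E}_\infty$ for $f \in \mathcal{L}_\infty$ (your sketch of this part is fine). Then, for $f \in \mathcal{L}_\infty$ and a cylinder set $\Delta$ of depth $m$, one approximates $f$ by $f_n$ depending only on $\xi_{i_n},\dots,\xi_{j_n}$ with $i_n > m$, expands $\Phi(f_n\chi_\Delta)$ via \eqref{def of Phi} and \eqref{deco}, and passes to the weak limit to obtain
\[
\Phi(f\chi_\Delta) = \sum_{\underline\xi^{(m)}} \chi_\Delta(\underline\xi^{(m)})\, \Pi_{\xi_1}\cdots\Pi_{\xi_m}\,\Phi(f)\,\Pi_{\xi_m}\cdots\Pi_{\xi_1}.
\]
Only \emph{now} does centrality of $\Phi(f)$ allow it to be pulled through the projections, yielding $\Phi(f)\Phi(\chi_\Delta)$. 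The asymmetry is the point: one factor must already lie in $\mathcal{L}_\infty$ so that its image is central; multiplicativity on all of $\mathcal{L}_\infty$ then follows by a density argument.
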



%
 
\subsubsection{States associated with positive elements  in $ \mathcal{L}_\infty $} \label{Tail-States}

In this section we associate with every non-negative function $f\in \mathcal{L}_\infty$ and every state 
$\omega$
a bounded, positive linear functional, denoted by $\Phi(f)^{*}(\omega)$, on $\mathcal{E}^{-}$. This functional is given by applying the ``dual map''$, \Phi(f)^* $, to the state $\omega$:
Since $\Phi(f) \geq 0$ if $f \geq 0$, we can take the positive square root, $\Phi(f)^{1/2}$, of the operator $\Phi(f)$ and define a bounded, normal, positive linear functional $\Phi(f)^{*}(\omega)$ on $\mathcal{E}^{-}$ by setting
 \begin{align} \label{dual}
\Phi(f)^*(\omega)(A) : = \omega \big ( \Phi(f)^{1/2} A\Phi(f)^{1/2} \big ), \qquad \text{ for an arbitrary  } A \in \mathcal{E}^{-}.
\end{align}  
If the function $f$ belongs to $\mathcal{L}_{\infty}$ then the operator $\Phi(f)$ belongs to 
$\mathcal{E}_{\infty}$. By \mbox{Assumption \ref{Decoherence}}, (second part: ``Asymptotic abelianess''), every operator in $\mathcal{E}_{\infty}$, and in particular $\Phi(f)$, with $0\leq f \in \mathcal{L}_{\infty}$, and hence $\Phi(f)^{1/2}$, 
belongs to the center of $\mathcal{E}^{-}$, i.e., it commutes with all operators in $\mathcal{E}^{-}$. Thus
\begin{equation*}
\Phi(f)^*(\omega)(A) : = \omega \big ( \Phi(f)^{1/2} A\Phi(f)^{1/2} \big ) = \omega \big(\Phi(f)\, A\big) = 
\omega \big(A\, \Phi(f)\big),
\end{equation*}
for any $A \in \mathcal{E}^{-}$. Moreover, if $0<A=B^{*}\cdot B$ then
\begin{equation}\label{commute}
\Phi(f)^{*}(\omega)(A)=\omega(B^{*}\Phi(f)B).
\end{equation}
This enables us to associate with every non-negative $\Sigma_{\infty}$- measurable function $f$ and every normal state 
$\omega$ on $\mathcal{E}^{-}$ a finite measure $\mu^{f}_{\omega}$ given by
\begin{align}\label{E4}
\mu_{\omega}^f(\Delta) := \int_{\Xi} f(\underline{\xi})\, \chi_{\Delta}(\underline{\xi}) \,
d\mu_{\omega}(\underline{\xi}),  \hspace{1cm} \text{with   }\Delta \in \Sigma. 
\end{align}  
The following theorem (proven in Section \ref{PEste}) is the main result of this section.

\begin{theorem} \label{Este}
For every non-negative function $f \in \mathcal{L}_{\infty}$ and every normal state $\omega$ on $\mathcal{E}^{-}$,
\begin{align}\label{princ1}
\mu_{\omega}^f = \mu_{\Phi(f)^*(\omega) }.
\end{align}
More precisely, every non-negative function $f \in \mathcal{L}_{\infty}$, with $\omega(\Phi(f)) \not= 0$, determines a normal state 
$\omega^{f}:=\omega(\Phi(f))^{-1}\, \Phi(f)^*(\omega)$ on $\mathcal{E}^{-}$ and a finite measure $\mu_{\omega}^{f}$, as in \eqref{E4}, on the space $\Xi$ of histories. 
\end{theorem}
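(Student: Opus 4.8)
The plan is to peel off the ``past'' coordinates: reduce the identity \eqref{princ1} to an equality that is checked on cylinder sets, and observe that on cylinders it becomes an operator identity which \emph{Ideal Decoherence} turns into a triviality. First I would record that both $\mu^f_\omega$ and $\mu_{\Phi(f)^*(\omega)}$ are \emph{finite} measures on $(\Xi,\Sigma)$: the first by \eqref{E4}, and the second because, as noted just before \eqref{dual}, $\Phi(f)^*(\omega)$ is a normal positive linear functional on $\mathcal E^-$ of total mass $\Phi(f)^*(\omega)({\bf 1})=\omega(\Phi(f))$, so $\mu_{\Phi(f)^*(\omega)}$ is well defined via \eqref{mu} and the Kolmogorov extension. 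Since cylinder sets form a $\pi$-system generating $\Sigma$ and $\Xi$ is itself a cylinder, the uniqueness theorem for finite measures reduces the claim to the equality $\mu^f_\omega(\mathcal C)=\mu_{\Phi(f)^*(\omega)}(\mathcal C)$ for all cylinders $\mathcal C$, and — by finite additivity and linearity in the base — it suffices to take $\mathcal C$ with a one-point base $\{\underline\eta^{(n)}\}\subset\mathcal X^{\times n}$.

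For such a $\mathcal C$ I would evaluate the right-hand side using that $\Phi(f)\in\mathcal O_\infty\subseteq\mathcal E_\infty$ lies in the centre of $\mathcal E^-$ (Assumption \ref{Decoherence} together with the remark after Definition \ref{AOI}) and using \eqref{def of Phi}:
\begin{equation*}
\mu_{\Phi(f)^*(\omega)}(\mathcal C)=\Phi(f)^*(\omega)\big(\Pi_{\underline\eta^{(n)}}\Pi_{\underline\eta^{(n)}}^{*}\big)=\omega\big(\Pi_{\underline\eta^{(n)}}\,\Phi(f)\,\Pi_{\underline\eta^{(n)}}^{*}\big)=\widetilde\omega\big(\Phi(f)\big)=\int_\Xi f\,d\mu_{\widetilde\omega},
\end{equation*}
where $\widetilde\omega(B):=\omega\big(\Pi_{\underline\eta^{(n)}}B\,\Pi_{\underline\eta^{(n)}}^{*}\big)$ is again a normal positive linear functional on $\mathcal E^-$ and the last step is \eqref{ok0} (extended to normal positive functionals). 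Since $\mu^f_\omega(\mathcal C)=\int_\Xi f\,\chi_{\mathcal C}\,d\mu_\omega=\int_\Xi f\,d(\chi_{\mathcal C}\mu_\omega)$, the theorem reduces to showing that the two finite measures $\mu_{\widetilde\omega}$ and $\chi_{\mathcal C}\mu_\omega$ assign the same integral to every $0\le f\in\mathcal L_\infty$; as such an $f$ is $\Sigma_\infty$-measurable and $\Sigma_\infty\subseteq\Sigma_n=\sigma(\xi_j:j>n)$, this follows once I show $\mu_{\widetilde\omega}$ and $\chi_{\mathcal C}\mu_\omega$ agree on $\Sigma_n$.

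By the uniqueness theorem once more it is enough to test $\mu_{\widetilde\omega}$ and $\chi_{\mathcal C}\mu_\omega$ on the cylinders $\mathcal C''=\{\underline\xi:\xi_{n+1}=\zeta_{n+1},\dots,\xi_N=\zeta_N\}$, $N>n$, which generate $\Sigma_n$, and on $\Xi$. On one hand $\big(\chi_{\mathcal C}\mu_\omega\big)(\mathcal C'')=\mu_\omega(\mathcal C\cap\mathcal C'')=\omega\big(\Pi_{\underline\eta^{(n)}}\Pi_{\underline\zeta^{(n+1,N)}}\Pi_{\underline\zeta^{(n+1,N)}}^{*}\Pi_{\underline\eta^{(n)}}^{*}\big)$ by \eqref{mu} and $\Pi_{(\underline\eta^{(n)},\underline\zeta^{(n+1,N)})}=\Pi_{\underline\eta^{(n)}}\Pi_{\underline\zeta^{(n+1,N)}}$ (from \eqref{Pi's}). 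On the other hand $\chi_{\mathcal C''}$ depends only on the coordinates $n+1,\dots,N$, so \eqref{def of Phi} gives $\Phi(\chi_{\mathcal C''})=\sum_{\xi_1,\dots,\xi_n}\Pi_{(\xi_1,\dots,\xi_n,\zeta_{n+1},\dots,\zeta_N)}\Pi_{(\cdots)}^{*}$, and iterating Ideal Decoherence \eqref{deco} to sum out $\xi_1,\dots,\xi_n$ one at a time collapses this to $\Phi(\chi_{\mathcal C''})=\Pi_{\underline\zeta^{(n+1,N)}}\Pi_{\underline\zeta^{(n+1,N)}}^{*}$; hence $\mu_{\widetilde\omega}(\mathcal C'')=\widetilde\omega(\Phi(\chi_{\mathcal C''}))$ equals the same expression, and on $\Xi$ both measures have mass $\omega(\Pi_{\underline\eta^{(n)}}\Pi_{\underline\eta^{(n)}}^{*})$. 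This establishes \eqref{princ1}. The ``more precisely'' part is then immediate: $\Phi(f)^*(\omega)$ is normal and positive with total mass $\omega(\Phi(f))$, so whenever $\omega(\Phi(f))\ne 0$ the functional $\omega^f=\omega(\Phi(f))^{-1}\Phi(f)^*(\omega)$ is a normal state on $\mathcal E^-$, and $\mu^f_\omega=\mu_{\Phi(f)^*(\omega)}=\omega(\Phi(f))\,\mu_{\omega^f}$ is the corresponding finite measure on $\Xi$.

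I expect the crux to be the third step: realising that the right reduction is to the coincidence of $\mu_{\widetilde\omega}$ and $\chi_{\mathcal C}\mu_\omega$ precisely on the ``future'' $\sigma$-algebra $\Sigma_n$ on which $f\in\mathcal L_\infty$ lives, and that it is exactly \emph{Ideal Decoherence} that lets one sum out the ``past'' coordinates $\xi_1,\dots,\xi_n$ so that the two sides match. Everything else — the two appeals to the uniqueness theorem for finite measures, the use of centrality of $\Phi(f)$, and the bookkeeping for the ``more precisely'' clause — is routine.
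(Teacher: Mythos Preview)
Your argument is correct. It uses the same two ingredients as the paper --- centrality of $\Phi(f)$ (Asymptotic abelianness) to pass $\Phi(f)$ inside the product $\Pi_{\underline\eta^{(n)}}(\cdot)\Pi_{\underline\eta^{(n)}}^{*}$, and Ideal Decoherence to sum out the ``past'' coordinates --- but organises them differently.

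The paper's proof is shorter because it factors through an identity already obtained in the proof of Theorem~\ref{rep}: for $f\in\mathcal L_\infty$ and any cylinder $\Delta$, $\Phi(f)\Phi(\chi_\Delta)=\Phi(f\chi_\Delta)$ (Eq.~\eqref{comu}). With that in hand the whole theorem is three lines:
\[
\mu_{\Phi(f)^*(\omega)}(\Delta)=\omega\big(\Phi(f)\,\Phi(\chi_\Delta)\big)=\omega\big(\Phi(f\chi_\Delta)\big)=\int_\Xi f\chi_\Delta\,d\mu_\omega=\mu_\omega^f(\Delta),
\]
the first equality by centrality, the second by \eqref{comu}, the third by \eqref{ok0}. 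What you do instead is rederive the content of \eqref{comu} in a measure-theoretic disguise: your computation that $\mu_{\widetilde\omega}$ and $\chi_{\mathcal C}\mu_\omega$ agree on $\Sigma_n$ (via the collapse $\Phi(\chi_{\mathcal C''})=\Pi_{\underline\zeta^{(n+1,N)}}\Pi_{\underline\zeta^{(n+1,N)}}^{*}$) is exactly the decoherence step that in the paper's presentation lives inside the proof of Theorem~\ref{rep}. Your route has the merit of being self-contained --- it does not presuppose Theorem~\ref{rep} --- at the cost of some extra bookkeeping with the two applications of the $\pi$-$\lambda$ uniqueness theorem. The paper's route buys brevity by recognising that the needed equality is a special case of the multiplicativity of $\Phi$ on $\mathcal L_\infty\cdot L^\infty(\Xi)$.
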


\subsubsection{$ \Sigma_{\infty}$-ergodic disintegration of $\mu_{\omega}$}
         
The measures $\mu_{\omega}$, with $\omega$ a normal state on $\mathcal{E} ^{-}$, encode interesting information on time-independent properties of the system. This is the contents of the next theorem (proven in Section \ref{PMD}).

\begin{theorem}\label{decom}
 There exists a measure space $  ( \Xi^{\infty},\Sigma^{\infty})$ and a probability measure 
 $P_\omega^\infty$ defined on $\Sigma^{\infty}$ such that        
\begin{align}\label{keypr}
 \mu_{\omega} = \int_{  \Xi^{\infty }}   \mu( \cdot | \,   \nu ) d   P_\omega^\infty ( \nu),
\end{align}
for probability measures $ \Big (   \mu( \cdot | \,   \nu )    \Big )_{ \nu  \in \Xi_\infty } $
defined on $\Sigma$ that are mutually singular and ``$\Sigma_{\infty}-$ergodic'', i.e., 
$\mu(\Delta \vert \nu) = 0 \text{  or  } 1$, for every $\Delta \in \Sigma_{\infty}$. 
\end{theorem}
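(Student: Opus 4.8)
The plan is to obtain the disintegration \eqref{keypr} by exploiting the representation $\Phi: \mathcal{L}_\infty \to \mathcal{E}_\infty$ from Theorem \ref{rep} together with the fact (Assumption \ref{Decoherence}) that $\mathcal{E}_\infty$ is an abelian von Neumann subalgebra of the center of $\mathcal{E}^-$. Since $\mathcal{O}_\infty \subseteq \mathcal{E}_\infty$ is abelian, I would pass to its weak closure $\mathcal{O}_\infty^-$ inside $\mathcal{E}^-$ and invoke the Gel'fand/spectral representation of a commutative von Neumann algebra: there is a measure space $(\Xi^\infty,\Sigma^\infty)$ — concretely the spectrum of $\mathcal{O}_\infty^-$, or a standard Borel model of it — such that $\mathcal{O}_\infty^- \cong L^\infty(\Xi^\infty, P_\omega^\infty)$, where $P_\omega^\infty$ is the scalar spectral measure of $\omega$ restricted to $\mathcal{O}_\infty^-$; i.e.\ $P_\omega^\infty$ is defined on projections $E \in \mathcal{O}_\infty^-$ by $P_\omega^\infty(E) := \omega(E)$, and $P_\omega^\infty(\Xi^\infty) = \omega(\mathbf{1}) = 1$. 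This supplies the measure space and the probability measure in the statement.

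Next I would produce the conditional measures $\mu(\cdot\,|\,\nu)$. The map $\Phi$, composed with the Gel'fand isomorphism, gives a $^*$-homomorphism $\mathcal{L}_\infty \to L^\infty(\Xi^\infty,P_\omega^\infty)$; dually, together with the measure-theoretic structure, this identifies $\Sigma_\infty$ (modulo $\mu_\omega$-null sets) with $\Sigma^\infty$ (modulo $P_\omega^\infty$-null sets) via $\Delta \mapsto \Phi(\chi_\Delta)$. Now for a fixed point $\nu \in \Xi^\infty$, Theorem \ref{Este} is exactly what provides the fiber measure: for a positive $f \in \mathcal{L}_\infty$ with $\omega(\Phi(f)) \neq 0$ one has the normalized state $\omega^f = \omega(\Phi(f))^{-1}\Phi(f)^*(\omega)$ and $\mu_{\omega^f} = \omega(\Phi(f))^{-1}\mu_\omega^f$. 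Taking $f = \chi_{\Delta_k}$ for a decreasing sequence of sets $\Delta_k \in \Sigma_\infty$ shrinking to the fiber over $\nu$ (i.e.\ whose images under $\Phi$ approximate the ``evaluation at $\nu$'' projection), the states $\omega^{\chi_{\Delta_k}}$ converge, and I would define $\mu(\cdot\,|\,\nu)$ as the limiting measure $\mu_{\omega_\nu}$ for the limiting state $\omega_\nu$. More cleanly: using the disintegration of the finite measure space $(\Xi,\Sigma,\mu_\omega)$ along the measurable map $\Xi \to \Xi^\infty$ induced by $\Sigma_\infty \hookrightarrow \Sigma$ (which exists because these are standard/Polish measure spaces, by our footnote assumption on $\mathcal{X}$), one gets a $P_\omega^\infty$-a.e.\ defined family $\mu(\cdot\,|\,\nu)$ with \eqref{keypr}; Theorem \ref{Este} then identifies, for each $\nu$, the measure $\mu(\cdot\,|\,\nu)$ with $\mu_{\omega_\nu}$ for a genuine normal state $\omega_\nu$ on $\mathcal{E}^-$ (this is the content of Corollary \ref{CMA} referred to in the introduction).

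To finish, I would verify the two asserted properties. Mutual singularity: the fibers of the map $\Xi \to \Xi^\infty$ are disjoint and $\mu(\cdot\,|\,\nu)$ is (essentially) supported on the fiber over $\nu$, so distinct $\nu$ give mutually singular measures — this is automatic from the standard disintegration theorem. $\Sigma_\infty$-ergodicity: for $\Delta \in \Sigma_\infty$, its indicator pulls back to a function in $\mathcal{L}_\infty$, hence $\Phi(\chi_\Delta)$ is a projection in $\mathcal{O}_\infty^- \cong L^\infty(\Xi^\infty)$, i.e.\ corresponds to a measurable subset $B_\Delta \subseteq \Xi^\infty$; one then checks $\mu(\Delta\,|\,\nu) = \chi_{B_\Delta}(\nu)$, which is $0$ or $1$ for every $\nu$. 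Concretely this follows by testing against states: $\int \mu(\Delta\,|\,\nu)\,g(\nu)\,dP_\omega^\infty(\nu) = \mu_{\omega}^{}(\Delta \cap \Phi^{-1}(\text{support of }g)) $-type identities coming from \eqref{ok0} and Theorem \ref{Este}, forcing $\mu(\Delta\,|\,\nu)$ to agree $P_\omega^\infty$-a.e.\ with the $\{0,1\}$-valued function $\chi_{B_\Delta}$.

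The main obstacle I anticipate is the fiberwise identification in the middle step: showing that the abstract conditional measures $\mu(\cdot\,|\,\nu)$ coming from the measure-theoretic disintegration genuinely coincide, for $P_\omega^\infty$-almost every $\nu$, with measures of the form $\mu_{\omega_\nu}$ for \emph{normal} states $\omega_\nu$ on $\mathcal{E}^-$ — i.e.\ that the operator-algebraic content of Theorem \ref{Este} survives the limit $f = \chi_{\Delta_k}$ as the sets shrink to single fibers. This requires care about (a) the existence of a countably generated approximating sequence in $\Sigma_\infty$ adapted to the Gel'fand spectrum of $\mathcal{O}_\infty^-$, and (b) normality/weak-$^*$ continuity of the limiting functional, which should follow from the POVM $\sigma$-additivity in Lemma \ref{Phi} and monotone convergence, but must be checked. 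Everything else (the spectral representation of the abelian algebra, the standard disintegration theorem, the $0$–$1$ law) is routine once this identification is in place.
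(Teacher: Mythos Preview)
Your approach is genuinely different from the paper's, and it is worth contrasting them. The paper's proof of Theorem~\ref{decom} is \emph{entirely measure-theoretic} and does not use $\Phi$, Theorem~\ref{rep}, or Theorem~\ref{Este} at all. It quotes Schwartz's disintegration theory: each $\Sigma_n$ is countably generated, hence ``$\mu_\omega$-strong''; intersections of strong sub-$\sigma$-algebras are strong; therefore $\Sigma_\infty$ is strong, which is precisely the statement that a $\Sigma_\infty$-ergodic regular conditional probability exists. The space $\Xi^\infty$ is then obtained by quotienting $\Xi$ by ``molecules'' $\mathrm{Mol}(\underline{\xi})=\{\underline{\xi}':\mu(\cdot|\underline{\xi})=\mu(\cdot|\underline{\xi}')\}$, and $P_\omega^\infty$ is the pushforward of $\mu_\omega$. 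Your route through the Gel'fand spectrum of $\mathcal{O}_\infty^-$ is operator-algebraic and conceptually attractive, but it is not how the paper argues.

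There is, however, a genuine gap in your ergodicity step. You argue that for each fixed $\Delta\in\Sigma_\infty$ one has $\mu(\Delta|\nu)=\chi_{B_\Delta}(\nu)$ for $P_\omega^\infty$-a.e.\ $\nu$. But the theorem requires: for a.e.\ $\nu$, \emph{for every} $\Delta\in\Sigma_\infty$, $\mu(\Delta|\nu)\in\{0,1\}$. Swapping these quantifiers is exactly the hard part. The tail $\sigma$-algebra $\Sigma_\infty$ is typically \emph{not} countably generated (this is the textbook example of a non-countably-generated sub-$\sigma$-algebra of a countably generated one), so you cannot simply intersect exceptional null sets over a countable generating family. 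The paper's invocation of Schwartz's ``strong'' $\sigma$-algebras is precisely the machinery that handles this; your sketch does not supply a substitute.

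A second, smaller issue: you devote most of your effort to identifying $\mu(\cdot|\nu)$ with $\mu_{\omega_\nu}$ for a normal state $\omega_\nu$, and you cite this as ``the content of Corollary~\ref{CMA}''. Neither is accurate. Theorem~\ref{decom} as stated does not assert that the fibre measures come from normal states, and Corollary~\ref{CMA} only asserts extremality in a convex set of measures; it says nothing about normality. So the ``main obstacle'' you flag is not required for this theorem, while the actual obstacle---the uniform-in-$\Delta$ ergodicity---goes unaddressed.
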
 
Note that the measures $\mu(\cdot \vert \nu)$ depend on our choice of a representation of the algebra $\mathcal{E}$, but not on the choice of a particular normal state $\omega$ on $\mathcal{E}^{-}$.
By the Gel'fand isomorphism, the abelian algebra $\mathcal{O}_{\infty}$ is isomorphic to the algebra of bounded continuous functions on the space $\Xi^{\infty}$. Points in $\Xi^{\infty}$ correspond to spectral values of the operators in $\mathcal{O}_{\infty}$, (``purification'').
Measurable sets $ \Delta \in \Sigma^{\infty}$ represent time-independent physical properties of the system $S$ that can be inferred from very long sequences of direct measurements of $S$, using the instrument 
$\mathcal{I}$, with error rates tending to $0$, as the length of the sequence of measurements of $S$ tends to $\infty$. 

In Section \ref{Choquet}, we prove the following result:
\begin{corollary}\label{CMA}
The measures $ \Big (   \mu( \cdot | \,   \nu )    \Big )_{ \nu  \in \Xi^\infty } $ are extreme points of a certain convex set of measures.
\end{corollary}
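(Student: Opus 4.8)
The plan is to identify the convex set in question as the set of normal states on $\mathcal{E}^{-}$ — or, more precisely, the set of probability measures $\mu_{\rho}$ on $(\Xi, \Sigma)$ arising from normal states $\rho$ on $\mathcal{E}^{-}$ — and to show that the $\Sigma_{\infty}$-ergodic measures $\mu(\cdot\vert\nu)$ appearing in Theorem~\ref{decom} are precisely its extreme points. First I would make the convex set precise: let $\mathcal{M}$ denote the set of all measures of the form $\mu_{\rho}$ for $\rho$ a normal state on $\mathcal{E}^{-}$; this is convex because $\rho \mapsto \mu_{\rho}$ is affine (immediate from the defining formula \eqref{mu}, which is linear in $\omega$). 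By Theorem~\ref{decom} each $\mu(\cdot\vert\nu)$ is $\Sigma_{\infty}$-ergodic, and by the construction underlying Theorems \ref{rep} and \ref{Este} almost every $\mu(\cdot\vert\nu)$ is of the form $\mu_{\omega^{f}}$ for a suitable conditional-expectation-type $f$; so almost every $\mu(\cdot\vert\nu)$ lies in $\mathcal{M}$.

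The key step is the equivalence ``$\mu_{\rho}$ is an extreme point of $\mathcal{M}$ $\iff$ $\mu_{\rho}$ is $\Sigma_{\infty}$-ergodic''. For the direction ``ergodic $\Rightarrow$ extreme'': suppose $\mu_{\rho} = \lambda \mu_{\rho_1} + (1-\lambda)\mu_{\rho_2}$ with $0<\lambda<1$ and $\rho_i$ normal states. Then $\mu_{\rho_i} \ll \mu_{\rho}$. Using the representation $\Phi$ of Theorem~\ref{rep}: the Radon--Nikodym derivative $d\mu_{\rho_1}/d\mu_{\rho}$ can be taken $\Sigma_{\infty}$-measurable — this is where I would work, showing that since $\rho_i \le \lambda^{-1}\rho$ on $\mathcal{E}^{-}$ and $\Phi$ is a $*$-homomorphism into the center $\mathcal{E}_{\infty}$, the operator implementing the Radon--Nikodym derivative lies in $\mathcal{O}_{\infty} \cong \mathcal{L}_{\infty}$, hence the density is $\Sigma_{\infty}$-measurable. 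But a $\Sigma_{\infty}$-ergodic measure has trivial tail $\sigma$-algebra, so any $\Sigma_{\infty}$-measurable function is $\mu_{\rho}$-a.e. constant; the density is therefore $\equiv 1$, giving $\mu_{\rho_1}=\mu_{\rho}$ and extremality. For ``extreme $\Rightarrow$ ergodic'': if $\mu_{\rho}$ is not ergodic, pick $\Delta \in \Sigma_{\infty}$ with $0 < \mu_{\rho}(\Delta) < 1$, set $f = \chi_{\Delta}$, use Theorem~\ref{Este} to get the normal state $\rho^{f}$ with $\mu_{\rho^{f}} = \mu_{\rho}(\Delta)^{-1}\mu_{\rho}^{f}$, and similarly with $\chi_{\Xi\setminus\Delta}$; these two states' measures are distinct (they are supported on disjoint tail sets) and convex-combine to $\mu_{\rho}$, contradicting extremality.

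Then I would appeal to Theorem~\ref{decom} once more: it provides the disintegration $\mu_{\omega} = \int \mu(\cdot\vert\nu)\,dP_{\omega}^{\infty}(\nu)$ into $\Sigma_{\infty}$-ergodic pieces, each of which (for $P_{\omega}^{\infty}$-a.e.\ $\nu$) is of the form $\mu_{\rho_{\nu}}$ for a normal state $\rho_{\nu}$ — this normality is exactly what Theorem~\ref{Este} buys, applied to the conditional densities, and is asserted in the discussion following Theorem~\ref{decom} (``almost everyone of which corresponds to a normal state''). Combining with the equivalence just established, $P_{\omega}^{\infty}$-a.e.\ $\mu(\cdot\vert\nu)$ is an extreme point of $\mathcal{M}$, which is the assertion of the corollary (with ``a certain convex set of measures'' $= \mathcal{M}$).

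The main obstacle I anticipate is the Radon--Nikodym/centrality argument in the ``ergodic $\Rightarrow$ extreme'' direction: one must argue carefully that the density of one normal-state measure with respect to another, when the first is dominated by a multiple of the second, is represented by an operator \emph{in $\mathcal{O}_{\infty}$} (equivalently, is $\Sigma_{\infty}$-measurable). The clean way is to note that $\rho_1 \le \lambda^{-1}\rho$ implies, via the standard operator Radon--Nikodym theorem on the von Neumann algebra $\mathcal{E}^{-}$, that $\rho_1(A) = \rho(T^{1/2}AT^{1/2})$ for some positive $T \in \mathcal{E}^{-}$; the point is that $\mu_{\rho_1}\ll\mu_{\rho}$ with $\Sigma_{\infty}$-measurable density forces $T$ to be (a.e. equivalent to) an element of $\mathcal{E}_{\infty}$, and here one leans on asymptotic abelianness (Assumption~\ref{Decoherence}) and on $\Phi$ being a $*$-homomorphism (Theorem~\ref{rep}) so that $\mu_{\Phi(g)^{*}(\rho)} = \mu_{\rho}^{g}$ (Theorem~\ref{Este}) pins down $T = \Phi(g)$ with $g = d\mu_{\rho_1}/d\mu_{\rho}$. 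Making this identification rigorous — rather than merely plausible — is the crux; everything else is bookkeeping with the already-established Theorems \ref{rep}--\ref{decom}.
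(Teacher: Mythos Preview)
Your approach diverges from the paper's in both the choice of convex set and the method of proof. The paper does \emph{not} take $\mathcal{M} = \{\mu_\rho : \rho \text{ a normal state on } \mathcal{E}^{-}\}$; instead it defines
\[
\mathcal{K} := \Big\{\mu \;\Big|\; \mu \text{ a Radon probability measure on } (\Xi,\Sigma) \text{ disintegrated w.r.t.\ } \Sigma^{\infty} \text{ by } \big(\mu(\cdot\vert\nu)\big)_{\nu\in\Xi^{\infty}}\Big\},
\]
and then simply invokes Theorem~106 of Schwartz's lecture notes \cite{SchwartzTata}, which asserts that the $\mu(\cdot\vert\nu)$ are exactly the extreme points of $\mathcal{K}$. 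The whole proof is a couple of sentences of pure measure theory --- no operator algebras, no appeal to Theorems~\ref{rep} or~\ref{Este}, no Radon--Nikodym manipulations. What this buys is that extremality holds for \emph{every} $\nu\in\Xi^{\infty}$, not merely $P_\omega^\infty$-almost every $\nu$, and the argument is independent of whether each $\mu(\cdot\vert\nu)$ actually arises from a normal state.

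Your route, by contrast, has a genuine gap at precisely the point you flag as ``the crux''. In the ``ergodic $\Rightarrow$ extreme'' direction you need the density $g = d\mu_{\rho_1}/d\mu_\rho$ to be $\Sigma_\infty$-measurable, and your justification is circular: you invoke Theorem~\ref{Este} to ``pin down $T = \Phi(g)$'', but Theorem~\ref{Este} only applies once $g\in\mathcal{L}_\infty$, which is exactly what you are trying to prove. The operator Radon--Nikodym theorem produces $T\in\mathcal{E}^{-}$, not $T\in\mathcal{E}_\infty$, and nothing in your sketch forces $T$ into the center; indeed the states $\rho_i$ in a decomposition $\mu_\rho = \lambda\mu_{\rho_1}+(1-\lambda)\mu_{\rho_2}$ may differ from $\rho$ through early-time projections, so there is no a~priori reason for the density to be tail-measurable. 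The step can be rescued, but only by using the fact (established in the proof of Theorem~\ref{decom}) that every $\mu_{\rho_i}$ is disintegrated by the \emph{same} universal family $(\mu(\cdot\vert\nu))$, so that ergodicity of $\mu_\rho$ forces $P_\rho^\infty$ --- and hence each $P_{\rho_i}^\infty$ --- to be a point mass. At that point you have reproduced the paper's measure-theoretic mechanism rather than given an independent operator-algebraic argument.
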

Corollary \ref{CMA} implies that \eqref{keypr} is a Choquet-type decomposition. 

In Section \ref{OSO}, we study a simple model with the property that the set $\Xi^{\infty}$ is a countable set of points, so that the integral on the right side of formula \eqref{keypr} is replaced by a sum (or a series), and the points in $\Xi^{\infty}$ are in $1-1$ correspondence with the set of eigenvalues of the time-independent operators in $\mathcal{O}_{\infty}$. This situation has previously been studied in \cite{BaFrFrSc}.

\subsection{Acknowledgments}

\noindent 

M. Ballesteros and M. Fraas are grateful  to J\"urg Fr\"ohlich and G.~M.~Graf   for hospitality at ETH Zurich, where part of this work was carried out. M. Ballesteros is a fellow of the Sistema Nacional de Investigadores (SNI). His research is partially supported by the projects PAPIIT-DGAPA UNAM IN102215 and SEP-CONACYT 254062. B. Schubnel thanks UNAM for hospitality.

\section{Example} \label{SE}

In this section we describe a concrete model system exemplifying our general results: We consider an electron gun shooting electrons into a $T$-shaped conducting channel.  At both ends of the
channel, there are electron detectors, $D_R$ and $D_L$; ($R$ stands for ``right'' and $L$ for ``left''). 
\begin{figure}[H]\label{H}
\begin{center}
\includegraphics[scale=0.30]{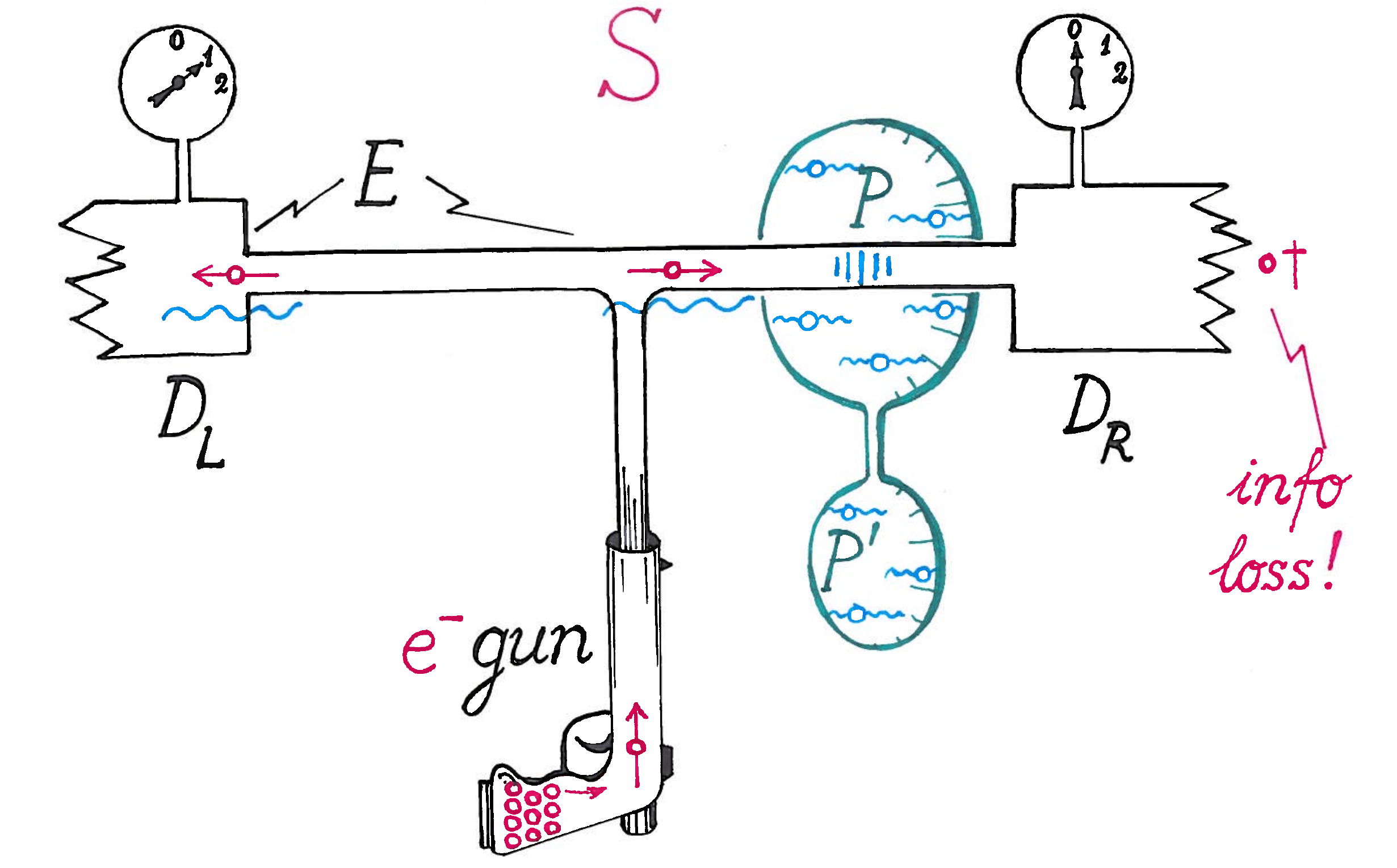}
\caption{The experimental setup.}
\end{center}
\end{figure}
Near the detector $D_R$, underneath the conducting channel, there is a quantum dot, $\overline P$, consisting of two components, $P$ and $P'$, joined by a tunneling junction through which electrons can tunnel from $P$ to $P'$ and back. The dot $\overline{P}$ can bind up to $N< \infty$ electrons that are localized essentially inside $P$ and close to the $T$-channel. These electrons create a \textit{Coulomb blockade} in the right arm of the $T$-channel, while electrons localized inside $P'$ do not have any noticeable effect on the motion of electrons in the $T$-channel. Tunneling of electrons from $P$ to $P'$ happens at a very slow rate, and the operator, $\mathcal{N}_P$, that counts the number of electrons localized inside $P$ is essentially time-independent and is assumed to commute with all operators referring to the electrons in the $T$-channel and in the reservoirs at the ends of the $T$-channel. The significance of $P'$ is that it enables one to prepare electronic states inside the quantum dot $\overline{P}$ that are \textit{not} eigenstates of $\mathcal{N}_P$, but are coherent superpositions of different eigenstates of 
$\mathcal{N}_P$.

The \textit{only} direct measurement or observation possible in the system considered here is to detect whether an electron traveling through the conducting $T$-channel triggers the detector $D_{L}$ at the entrance to the reservoir on the left or the detector $D_{R}$ at the entrance to the reservoir at the right end of the $T$-channel. The symbol ``$-1$'' indicates that the detector $D_{L}$ is triggered, the symbol ``$ 1 $'' stands for a click of $D_{R}$. We interpret these symbols as the eigenvalues of an operator $\hat{X}$ whose spectrum is given by $\lbrace -1,+1\rbrace$, with the property that the eigenvalues $-1$ and $+1$ are both infinitely degenerate. The purpose of sending electrons through the $T$-channel and then observing whether $D_L$ clicks or $D_R$ clicks is to infer from a long sequence of such measurements how many electrons are localized inside $P$, i.e., to determine the eigenvalue of $\mathcal{N}_P$. It has been shown in \cite{BaFrFrSc} that, under suitable hypotheses on the model, every very long sequence of direct measurements of the operator $\hat{X}$ corresponds to a \textit{unique} eigenvalue of $\mathcal{N}_P$, with an error rate that tends to $0$, as the length of the sequence of direct measurements tends to $\infty$. This is the phenomenon of ``purification'' described in \cite{Mass}.

A similar system has actually been realized in the laboratory of Haroche and Raimond and is described in \cite{guerlin}: In their experiment, the quantum dot $\overline{P}$ is replaced by a cavity in which a fairly stable electromagnetic field can be excited and confined; the electrons in the $T$-channel are replaced by Rydberg atoms prepared in a certain fixed coherent superposition of two highly excited internal states, $\vert 1 \rangle$ and $\vert 2 \rangle$. These atoms stream through the cavity, one by one, without emitting or absorbing any photons. During its sojourn inside the cavity, the internal state of each atom precesses in the space spanned by the states $\vert 1 \rangle$ and $\vert 2 \rangle$ with an angular velocity that depends on the number, $\nu$, of photons in the cavity. The detectors $D_L$ and $D_R$ are replaced by a device measuring a projection of the internal state of every Rydberg atom after it has emerged from the cavity. The two possible outcomes of this measurement can be represented by the symbols 
$\pm 1$, as in the system considered above.

 In both systems, a history, $\underline{\xi}$, is a sequence of symbols, $\xi_{i} = \pm 1, \, i=1,2,3,...,$ corresponding to the outcomes of infinitely many direct measurements of the observable represented by the operator $\hat{X}$ ($=$ clicks of one of the two detectors, in the first model), as described above. The first part of Assumption \ref{Decoherence} is very nearly satisfied, as subsequent electrons moving through the $T$-channel (or Rydberg atoms streaming through the cavity) are essentially \textit{independent} of one another. Furthermore, under certain hypotheses on the dynamics of the system, the algebra 
 $\mathcal{E}_{\infty}$ turns out to be isomorphic to the algebra generated by all bounded functions of the operator $\mathcal{N}_P$, (or of the photon number operator, respectively), and satisfies the second part of Assumption  \ref{Decoherence}. Thus, the algebra $\mathcal{E}_{\infty}$ is isomorphic to the algebra of bounded functions on the set 
 $\lbrace 0, \dots ,N \rbrace$ of eigenvalues of $\mathcal{N}_{P}$, which, under natural assumptions, is generated by the frequency, 
 $p_{1}(\underline{\xi}) \in \mathcal{L}_{\infty}$, of the symbol $1$ in a history $\underline{\xi}$. (As shown in \cite{BaFrFrSc}, almost every history $\underline{\xi} \in \Xi$ corresponds to a unique eigenvalue 
 $\nu = \nu(\underline{\xi}) \in \lbrace 0,1,\dots, N \rbrace$ of the operator
 $\mathcal{N}_{P}$, and $\nu(\underline{\xi})$ corresponds to a unique value of the frequency 
  $p_{1}(\underline{\xi})$ of the symbol $1$ in the history $\underline{\xi}$.)

We denote by $\Pi_{\xi}$ the spectral projection of the operator $\hat{X}$ corresponding to the eigenvalue 
$\xi=\pm 1$, and by $Q_{\nu}$ the spectral projection of the operator $\mathcal{N}_{P}$ corresponding to the eigenvalue \mbox{$\nu \in \lbrace 0, \dots, N \rbrace$}. The only instrument $\mathcal{I}$ in our model is generated by the two projections 
$\lbrace \Pi_{-1}, \Pi_{+1}\rbrace$. By $\Pi_{\xi}(t )$ we denote the orthogonal projection representing 
$\Pi_{\xi}$ in the algebra $\mathcal{E}_{\geq t}$; see item IV, Sect. \ref{GF}. Whenever an electron travels through the $T$-channel and reaches a reservoir at the left or right end of the $T$-channel the instrument $\mathcal{I}$ is triggered and exhibits the value $-1$ if $D_L$ has clicked and the value $+1$ if $D_R$ has clicked. 
Let $t_i$ denote the time at which the $i^{th}$ direct measurement of $\hat{X}$ has been completed, corresponding to a click of one of the two detectors $D_L, D_R$, and let $\xi_{i}$ denote the result of the $i^{th}$ measurement.
Then the probability of observing the sequence $\lbrace \xi_{1}, \dots,\xi_{n} \rbrace$ in the first $n$ direct measurements is given by
\begin{align}\label{pto1}
\mu_{\omega}( \xi_1, \cdots, \xi_n  ) := \omega\Big(  \pi_{\xi_1} \cdots \pi_{\xi_n}   \pi_{\xi_n} \cdots \pi_{\xi_1}  \Big),
\end{align}    
where $\omega$ is the state of the total system, and $\pi_{\xi_i} := \Pi_{\xi_i}(t_i)$; (this is the so-called L\"{u}ders-Schwinger-Wigner- or LSW-formula; see \eqref{mu}.)
%
%
In \cite{BaFrFrSc} we have shown that, under certain assumptions on the model, the Choquet decomposition of the measure 
$\mu_{\omega}$ on the space $\Xi$ of histories determined by \eqref{pto1}, as described in
Theorem \ref{decom}, reduces to a finite sum:
\begin{align}\label{ok1}
\mu_{\omega} = \sum_{\nu = 0}^{N} p_\nu(\omega)\, \mu( \cdot  |\nu ),
\end{align}
where $$p_{\nu}(\omega)=\omega(Q_{\nu})\geq 0$$
 is the Born probability of observing $\nu$ electrons bound to $P$.
The space $\Xi$ of histories has a decomposition into mutually disjoint  subsets 
$\Xi_\nu \in \Sigma_{\infty} $, $\nu \in \{ 0, \cdots, N\} $, with the property that the union of these subsets has full measure (w. r. to any of the measures $\mu_{\omega}$) and that, for every set $\Delta \in \Sigma$,
\begin{align}
\mu(\Delta | \nu)  = \frac{1}{p_{\nu}(\omega)}\mu_{\omega}(\Delta \cap  \Xi_\nu).
\end{align}
The measures $\mu(\cdot | \nu)$ satisfy a $0$-$1$ law when restricted to $\Sigma_{\infty}$. A set of histories contained in the subset $\Xi_{\nu}$ corresponds to selecting an eigenstate of $\mathcal{N}_{P}$ corresponding to the eigenvalue $\nu$, for $\nu = 0,1,\dots,N$. 

Let us add a little further precision to this discussion. We assume that subsequent direct measurements of the observable $\hat{X}$ are strictly \textit{independent} of each other. This can be expressed as the property that the measures 
$\mu_{\omega}$ are \textit{exchangeable}, i.e.,
$$\mu_{\omega}(\xi_{\sigma(1)}, \dots \xi_{\sigma(n)}) \text{  is } independent \text{  of the permutation  }\sigma, \quad \forall \text{  permutations  } \sigma,  \forall  n=1,2,3, \dots$$
Then \textit{de Finetti's theorem} says that the measures $\mu(\cdot \vert \nu)$ are product measures, i.e.,
\begin{equation}
\mu(\xi_1, \dots, \xi_n \vert \nu) = \prod_{i=1}^{n} p(\xi_i \vert \nu), \qquad \forall n=1,2,3, \dots,
\end{equation}
for some probabilities $p(\xi \vert \nu)$ with $\sum_{\xi} p(\xi \vert \nu) = 1$, for all $\nu = 0,1,\dots,N$.
The interpretation of the probabilities $p(\xi\vert \nu)$ is that $p(1\vert \nu)$ is the frequency of the measurement outcome $\xi = 1$ in every history $\underline{\xi}$ belonging to the subset $\Xi_{\nu}$, for $\nu = 0,1,\dots,N$. Henceforth it is assumed that the function $p(1\vert \nu)$ separates points in $\lbrace 0,\dots , N \rbrace$; (see \cite{BaFrFrSc}).

Let  $f$ be a bounded
   $\Sigma_{\infty}-$measurable function. Then $$f=\sum_{\nu=0}^{N} f_{\nu} \chi_{\Xi_{\nu}},$$ 
   with $f_{\nu} \in \mathbb{C}$, and $\Phi(f)=\sum_{\nu=0}^{N}f_{\nu}Q_{\nu}$. If $f\geq 0$ then
   \begin{equation}\label{state-dec}
   \Phi(f)^{*}(\omega) = \sum_{\nu=0}^{N} f_{\nu} \, \omega_{\nu},
   \end{equation} 
  where
  $$\omega_{\nu}(A):=\omega(Q_{\nu}AQ_{\nu}), \qquad \text{for an arbitrary    }A \in B(\mathcal{H}_{S}).$$ 
  
To prove \eqref{state-dec}, one shows that the operators $\Phi(f)$ and $\Phi(f)^{1/2}$, with f an arbitrary bounded non-negative $\Sigma_{\infty}-$measurable function on $\Xi$,
are functions of $\mathcal{N}_{P}$. For details see \cite{BaFrFrSc}.

\section{Proofs}  \label{Proofs}

{
We start with a key technical lemma that we use to approximate $\Phi(f)$ by operators of the form Eq.~(\ref{def of Phi}). 
\begin{lemma}
\label{lem:aux}
For every function $f \in L^\infty(\Xi)$ there exists an increasing sequence $j_n$ and a sequence of functions $f_n \in L^\infty(\Xi)$ measurable with respect to the algebra $\Sigma_{j_n}$ such that
$$
w-\lim_{n \to \infty} \Phi(|f_n - f|) = 0.
$$
If in addition $f$ is measurable with respect to the algebra $\Sigma_\infty$ then $f_n$ can be chosen so that it is measurable with respect to the algebra $\Sigma_{i_n,j_n}$ for some increasing sequences $(i_n)_{n \in \mathbb{N}},\,(j_n)_{n \in \mathbb{N}}$ with $i_n<j_n$.
\end{lemma}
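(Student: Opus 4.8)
The plan is to convert this operator statement into a purely measure-theoretic one and then invoke the martingale convergence theorem. Two features of $\Phi$ suffice, both recorded around Eq.~\eqref{def of Phi} and in Lemma~\ref{Phi}: for $g\ge 0$ one has $\Phi(g)\ge 0$ and $\|\Phi(g)\|\le\|g\|_{\infty}$, and $\omega(\Phi(g))=\int_{\Xi}g\,d\mu_{\omega}$ for every normal state $\omega$. Since $\mathcal H_{S}$ is separable, fix an orthonormal basis $(e_{i})_{i\in\mathbb N}$, let $\omega_{i}$ be the associated vector states and $\mu_{i}:=\mu_{\omega_{i}}$ the corresponding finite measures on $(\Xi,\Sigma)$. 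A bounded sequence of \emph{positive} operators on a separable Hilbert space tends to $0$ in the weak operator topology if and only if $\langle e_{i},A_{n}e_{i}\rangle\to 0$ for every $i$ (the off-diagonal entries are then controlled by Cauchy--Schwarz together with the uniform norm bound). Hence, as long as $\|f_{n}\|_{\infty}\le\|f\|_{\infty}$ (so that $\|\Phi(|f_{n}-f|)\|\le 2\|f\|_{\infty}$), the conclusion $w\text{-}\lim_{n}\Phi(|f_{n}-f|)=0$ is \emph{equivalent} to $\int_{\Xi}|f_{n}-f|\,d\mu_{i}\to 0$ for every $i\in\mathbb N$.

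The second step is to combine the countably many finite measures $\mu_{i}$ into one dominating measure: set $\mu^{\ast}:=\sum_{i}2^{-i}\bigl(1+\mu_{i}(\Xi)\bigr)^{-1}\mu_{i}$, a finite measure on $(\Xi,\Sigma)$ with $\mu_{i}\ll\mu^{\ast}$ for all $i$. For the first assertion I take $f_{n}:=\mathbb E_{\mu^{\ast}}\!\bigl[f\,\big|\,\sigma(\xi_{1},\dots,\xi_{n})\bigr]$, which is a genuine bounded simple function on the $n$-step cylinder partition, so $\|f_{n}\|_{\infty}\le\|f\|_{\infty}$ and $f_{n}$ depends only on the first $n$ outcomes $\xi_{1},\dots,\xi_{n}$ (take $j_{n}:=n$). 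Since $\Sigma$ is generated by the coordinates $\xi_{1},\xi_{2},\dots$ and $f$ is $\Sigma$-measurable, the martingale convergence theorem gives $f_{n}\to f$ $\mu^{\ast}$-a.e., hence $\mu_{i}$-a.e. for every $i$; dominated convergence against the finite measure $\mu_{i}$ then yields $\int|f_{n}-f|\,d\mu_{i}\to 0$, and the first paragraph concludes.

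For the second assertion one runs the same construction inside a window that is pushed to infinity. If $f$ is $\Sigma_{\infty}$-measurable then it is $\Sigma_{m}$-measurable for every $m$, and $\Sigma_{m}=\sigma(\xi_{m+1},\xi_{m+2},\dots)=\bigvee_{k\ge 1}\sigma(\xi_{m+1},\dots,\xi_{m+k})$; so for each fixed $m$ the martingale convergence theorem gives $h_{m,k}:=\mathbb E_{\mu^{\ast}}\!\bigl[f\,\big|\,\sigma(\xi_{m+1},\dots,\xi_{m+k})\bigr]\to f$ $\mu^{\ast}$-a.e.\ (hence $\mu_{i}$-a.e.) as $k\to\infty$, with $\|h_{m,k}\|_{\infty}\le\|f\|_{\infty}$, so that $\int|h_{m,k}-f|\,d\mu_{i}\to 0$ as $k\to\infty$ for every $i$. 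A routine diagonalization finishes the argument: choose $k_{n}\ge 2$, non-decreasing, large enough that $\int|h_{n,k_{n}}-f|\,d\mu_{i}\le 1/n$ for all $i\le n$, and set $f_{n}:=h_{n,k_{n}}$; then $f_{n}$ is measurable with respect to $\sigma(\xi_{n+1},\dots,\xi_{n+k_{n}})=\Sigma_{i_{n},j_{n}}$, where $i_{n}:=n+1$ and $j_{n}:=n+k_{n}$ are increasing with $i_{n}<j_{n}$, and for fixed $i$ and $n\ge i$ one has $\int|f_{n}-f|\,d\mu_{i}\le 1/n\to 0$, so $w\text{-}\lim_{n}\Phi(|f_{n}-f|)=0$.

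The one real obstacle — the point where the argument genuinely has to do something — is that weak-operator convergence of the operators $\Phi(|f_{n}-f|)$ is strictly stronger than $L^{1}(\mu_{\omega})$-convergence for a single state: the sequence $(f_{n})$ has to be chosen once and for all, valid for every normal state simultaneously. Separability of $\mathcal H_{S}$ reduces this to countably many states, and the dominating measure $\mu^{\ast}$ together with the diagonal choice of window size then produces such a sequence. I would also remark that neither the Ideal Decoherence nor the asymptotic-abelianness half of Assumption~\ref{Decoherence} enters this lemma: only positivity and $\|\cdot\|_{\infty}$-contractivity of $\Phi$ and the fact that each $\mu_{\omega}$ is an honest finite measure on $(\Xi,\Sigma)$ are used. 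A minor point to verify is that conditional expectation onto a \emph{finite} $\sigma$-algebra can be represented by an honest bounded function (a simple function on the atoms), so the $f_{n}$ are legitimate elements of $L^{\infty}(\Xi)$ of the required form.
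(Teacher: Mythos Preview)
Your proof is correct, and it proceeds along somewhat different lines from the paper's. The paper first reduces to characteristic functions $\chi_{\Delta}$, invokes a Billingsley-type approximation of $\Delta$ by cylinder sets in $L^{1}(\mu_{\omega^{(m)}})$ simultaneously for a countable dense family of normal states $(\omega^{(m)})$ (via a diagonal bound of the form ``$<1/n$ for all $m\le n$''), and then passes to general $f$ by density through simple functions. You instead combine the countably many state measures into a single dominating measure $\mu^{\ast}$ and appeal to the martingale convergence theorem for the finite-coordinate filtration; this produces the approximants $f_{n}$ directly as conditional expectations, which are automatically bounded simple functions on the cylinder partition, so no separate passage through characteristic/simple functions is needed. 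The diagonalization in the $\Sigma_{\infty}$ case is the same in spirit in both proofs. Both arguments rely on separability of $\mathcal H_{S}$ to reduce weak-operator convergence to countably many $L^{1}$ conditions; your packaging via $\mu^{\ast}$ is a bit more streamlined, while the paper's route keeps the argument closer to elementary measure-theoretic approximation and avoids invoking martingale convergence.
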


 \begin{proof}
The proof relies on standard arguments in measure theory. We only  prove the statement of the lemma  in the case where $f$ is $\Sigma_\infty$-measurable. Most of the arguments  used in our proof apply to the general case without any modification. We first focus on characteristic functions. The following assertion is not difficult to prove (see for instance \cite{Bill}, Theorem 11.4, for similar results): \\

\noindent \textbf{(Approximation by cylinder sets)}  Let  $\Delta \in \Sigma$ and let $(\omega^{(m)})_{m \in \mathbb{N}}$ be a sequence of normal states in $\mathcal{E}$.  Then there exists a sequence $ (\Delta_n)_{n \in \mathbb{N}}$ of cylinder sets such that
 \begin{equation}
 \mu_{\omega^{(m)}}(\Delta_n \setminus \Delta)+ \mu_{\omega^{(m)}}( \Delta  \setminus \Delta_n) <\frac{1}{n}
 \end{equation}
 for all  $m  \leq n$. If $\Delta \in \Sigma_{\infty}$, then the sets $\Delta_n $ can be chosen to be in 
 $\Sigma_{i_n,j_n}$ for some increasing sequences $(i_n)_{n \in \mathbb{N}},\,(j_n)_{n \in \mathbb{N}}$ with $i_n<j_n$.\\

Let $\Delta \in \Sigma_{\infty}$. The assertion above implies the existence of a sequence of cylinder sets $(\Delta_n )_{n \in \mathbb{N}}$, such that  $\chi_{\Delta_{n}}$ converges to $\chi_{\Delta}$ in $L^1(\Xi, \mu_{\omega^{(m)}}, \Sigma)$ for all $m \in \mathbb{N}$. Moreover, $\Delta_n \in  
 \Sigma_{i_n,j_n}$ for some increasing sequences $(i_n)_{n \in \mathbb{N}},\,(j_n)_{n \in \mathbb{N}}$ with $i_n<j_n$.  By definition of the map $\Phi$ (see Eq. \eqref{ok0}), we have that 
\begin{align} \label{weak11}
\omega^{(m)}(\Phi( \vert \chi_{\Delta_{n}} - \chi_{\Delta} \vert ))\underset{n \rightarrow \infty}{\longrightarrow} 0
\end{align}
for all $m \in \mathbb{N}$. The Hilbert space $\mathcal{H}$ being separable, we deduce from  \eqref{weak11} that $\omega(\Phi( \vert \chi_{\Delta_{n}} -\chi_{\Delta} \vert ))$ converges to zero for every  state $\omega$ in $\mathcal{E}^{-}$ by a density argument. It follows from Jordan's decomposition  theorem (see e.g. \cite[Theorem, 3.3.10]{Mur}) that $\Phi(\chi_{\Delta_{n}})$ converges weakly to $\Phi(\chi_{\Delta})$. In particular $\Phi(\chi_{\Delta})$ belongs to $\mathcal{E}^{-}$ (the von-Neumann algebra $\mathcal{E}^{-}$ is weakly closed) and $\Phi(\chi_{\Delta})$ is bounded in norm because $|\omega(\Phi(\chi_{\Delta}))| \leq 4 \|  \omega \|  $ for every bounded linear functional $\omega$ on $\mathcal{E}^{-}$.  The proof of the  same results for an arbitrary function $f \in L^\infty(\Xi)$ measurable with respect to the $\sigma$-algebra $\Sigma_\infty$ follows directly from a density argument using simple functions.
 \end{proof}
 
\subsection{Proof of Lemma \ref{Phi}}\label{PHIF}
Let $f \in  L^\infty(\Xi)$. For every positive linear functional $\omega$, we have that  
$$
\left| \int_\Xi f(\underline{\xi}) \mathrm{d} \mu_\omega(\underline{\xi}) \right| \leq \int_\Xi ||f||_\infty \mathrm{d} \mu_\omega(\underline{\xi}) = ||f_\infty|| \omega(\id).
$$
This bound and Lemma \ref{lem:aux} imply that $\Phi(f) \in \mathcal{E}^-$. The properties of  a POVM are inherited from the corresponding properties of integrals in Eq.~(\ref{ok0}). For example $\omega(\Phi(\Xi)) = 1$ for all states $\omega$ implies $\Phi(\Xi) = \id$.  Next we remind that, by definition,
\begin{equation}
\omega\big ( \Phi(\Delta)\big ) =   \mu_{ \omega }(\Delta), 
\end{equation}
for every set $\Delta \in \Sigma$ and for any positive functional $\omega$. Therefore, for any disjoint sequence $(\Delta_{n})_{n \in \mathbb{N}} \in \Sigma$,
\begin{equation}
 \omega\Big ( \Phi\big (\bigcup_n \Delta_n\big )\Big )  =  
 \mu_{ \omega }\Big ( \bigcup_n \Delta_n\Big ) = 
 \sum_n  \mu_{ \omega }\big (  \Delta_n \big ) = \sum_n   \omega \big (  \Phi (\Delta_n ) \big ). 
\end{equation}
 This shows that \begin{equation}\label{canijo}
\Phi (\bigcup_n \Delta_n ) = \sum_n \Phi (\Delta_n ),
\end{equation}
where the series converges in the $\sigma$-weak topology (i.e. it also converges weakly).  \qed 

\subsection{Proof of Theorem \ref{rep}} \label{Prep}
Let $f \in \mathcal{L}_\infty$. Consider a sequence of functions $(f_n)_{n \in \mathbb{N}}$ as in  Lemma~\ref{lem:aux},  such that $\Phi(f_n) \to \Phi(f)$ weakly  and that $f_n$ is measurable with respect to the algebra $\Sigma_{i_n,j_n}$
 for some strictly increasing sequences $ i_n, j_n$  with $i_n < j_n$. Using Assumption \ref{Decoherence}, we have that
\begin{align}\label{Phifinfty}
\Phi(f_n)  =   \sum_{\underline{\xi}^{(i_n, j_n)}} 
f_n \Big ( \underline{\xi}^{(i_n, j_n)} \Big )\Pi_{ \underline{\xi}^{(i_n, j_n)}  } 
  \big ( \Pi_{  \underline{\xi}^{(i_n, j_n)}  } \big )^{*} ,
\end{align}
and hence $\Phi(f_m)$ belongs to $ \mathcal{E}_{\geq t_{i_n}} $, for every 
  $m 
  \geq n$, (see Eq. \eqref{Pi's}).  We conclude that 
 $\Phi(f) \in \mathcal{E}_{\infty} $.

Next, we show that $\Phi\vert_{\mathcal{L}_{\infty}}$ is a $^{*}$homomorphism from $\mathcal{L}_{\infty}$ to $\mathcal{E}_{\infty}$. It suffices to show that $ \Phi(f\cdot \chi_{\Delta}) =  
 \Phi(f) \cdot \Phi(\chi_{\Delta}) $ for any cylinder set $\Delta \in \Sigma$.  The previous equality then easily extends to arbitrary functions $f, g  \in \mathcal{L}_\infty$ by a density argument. (Moreover, compatibility with the $^{*}$ operation is obvious). 
 Let $f_n$ be approximations of the function $f$ as in the first part of the proof and assume that $\Delta  \in \Sigma_{m}$ for some $m \in \mathbb{N}$. If $n$ is such that $i_n>m$, we have that 
 \begin{align} \label{deco1}
 \Phi(f_n\cdot \chi_{\Delta})&=\sum_{\underline{\xi}^{(m)},\underline{\xi}^{(i_n,j_n)} } \chi_{\Delta} (\underline{\xi}^{(m)})  f_n(\underline{\xi}^{(i_n,j_n)}) \text{ } 
 \Pi_{\xi_1} \, ...  \, \Pi_{\xi_m}  \Pi_{\xi_{i_n}} \,...\, \Pi_{\xi_{j_n} } \Pi_{\xi_{j_n}} \,... \, \Pi_{\xi_{i_n}} \Pi_{\xi_m} \, ... \, \Pi_{\xi_1}
 \end{align}
 as a consequence of the decoherence assumption in Assumption \ref{Decoherence}. Taking the weak limit on both sides of Eq. \ref{deco1}, we deduce that
 \begin{align*} 
 \Phi(f\cdot \chi_{\Delta})&=\sum_{\underline{\xi}^{(m)}} \chi_{\Delta} (\underline{\xi}^{(m)})  \Pi_{\xi_1} \, ...  \, \Pi_{\xi_m}  \Phi(f) \Pi_{\xi_m} \, ... \, \Pi_{\xi_1}.
 \end{align*}
Using now that  $\mathcal{E}_{\infty}$ is contained in the center of $\mathcal{E}^{-}$ (Asymptotic abelianess, see Assumption \ref{Decoherence}) and that  $\Phi(f) \in \mathcal{E}_{\infty}$, we can commute $\Phi(f)$ with the projections $\Pi_{\xi_k}$ for any $k \in \{1,...,m\}$.  Therefore,
\begin{equation} \label{comu}
 \Phi(f \chi_{\Delta}) = \Phi(f) \Phi(\chi_{\Delta}).
\end{equation}
\qed 

\subsection{Proof of Theorem \ref{Este}}\label{PEste}
Let $f \in \mathcal{L}_\infty$. For a characteristic function $\chi_\Delta$ of a cylinder set $\Delta$, we have, using asymptotic abelianess (Assumption~\ref{Decoherence}), that
$$
\mu_{\Phi(f)^*\omega}(\Delta) = \omega(\Phi(f) \Phi(\chi_\Delta)).
$$
In the proof of Theorem~\ref{rep}, Eq.~(\ref{comu}), we have established that $\Phi(f) \Phi(\chi_\Delta) = \Phi(f \chi_\Delta)$. Hence $\mu_{\Phi(f)^*\omega}(\Delta) = \mu_\omega^f(\Delta).$
  
  \qed
}
\subsection{Proof of Theorem \ref{decom}} \label{PMD}

 We follow the formalism in \cite{SchwartzTata} (which is a review of \cite{Schwartz}), where the reader finds most of the results relevant for our purposes. Similar results can also be found in \cite{Shi}, \cite{Rau} and Theorem 2.111 in \cite{S1995}.

Formula \eqref{keypr} represents a special case of 
$ \Sigma_{\infty} $-ergodic disintegration of measures -- in this paper applied to the measures 
$\mu_{\omega}$. Specifically, given a measure $ \mu_{\omega} $ on $(\Xi, \Sigma)$, where $\omega$ is a normal state on 
$\mathcal{E}^{-}$, the $\Sigma_{\infty}$-ergodic disintegration of $\mu_{\omega}$ consists of a  family of probability measures $\Big ( \mu( \cdot | \underline \xi) \Big )_{\underline \xi \in \Xi }$ with the properties that
\begin{enumerate}
\item for every $\Delta  \in \Sigma $,  the function $  \Xi \ni \underline \xi  \to \mu(\Delta | \underline \xi )  $
is measurable with respect to $\Sigma_{\infty}$;  
\item  for every $\Delta  \in \Sigma $ and every $\Lambda \in \Sigma_{\infty}$
\begin{align}\label{rcp}
\int_\Lambda \mu(\Delta | \underline \xi ) d  \mu_{\omega}(\underline \xi) =    \mu_{\omega}(\Lambda  \cap \Delta); 
\end{align}
\item for almost every $ \underline \xi \in \Xi   $  (with respect to the measures $\mu_{\omega}$) 
$\mu( \cdot | \underline \xi)$ is $\Sigma_\infty -$ergodic, i.e.,  
\begin{align}  \label{hay}
   \mu(\Lambda | \underline \xi) \in \{ 0, 1  \},    
\end{align} 
for every $\Lambda \in \Sigma_{\infty}$.
\end{enumerate}



Proposition 103 in \cite{SchwartzTata} implies that  the measures $\mu_{\omega} $, with $\omega$ a normal state on $\mathcal{E}^{-}$, have a unique $\Sigma_{\infty}$-ergodic decomposition, as described above. The existence of such a decomposition is not entirely obvious; it is a property that is stronger than the existence of conditional expectations, as the latter property does not require ergodicity, (Item 3, above). Item 3 is a somewhat delicate issue and is based on certain specific hypotheses. 

Next, we briefly explain why Proposition 103 in \cite{SchwartzTata} is applicable in our situation. The hypotheses on $\Sigma$ are listed on page 113 in \cite{SchwartzTata}.
Since $\mathcal{X}$ is a compact Polish space, $ \Xi $  is a compact Polish space, too; see \eqref{historyspace}.  
Hence the sigma algebra $\Sigma$ is countably generated, and it satisfies properties (1)-(4) on page 113 in \cite{SchwartzTata} .
A $\sigma$-algebra $\Omega \subset \Sigma$ for which an $\Omega$-ergodic decomposition (satisfying Items 1.-3. above, with $\Omega$ replacing  $\Sigma_{\infty} $) of the measures $\mu_{\omega}$ exists  is called, in the terminology of \cite{SchwartzTata},   ``$\mu_{\omega}$-strong''; (see Proposition 102 in \cite{SchwartzTata} ). All $\sigma$-algebras $\Sigma_{n}$, for every $n$, are $\mu_{\omega}-$strong, (as they are countably generated -- see Proposition 99 in \cite{SchwartzTata}  ). Because the intersection of 
$\mu_{\omega}$-strong $\sigma$-algebras is  $\mu_{\omega}$-strong, (see Proposition 103 in \cite{SchwartzTata} ), it follows that $\Sigma_{\infty}$ is $ \mu_{\omega}$-strong, too. Thus, every measure 
$\mu_{\omega}$ (with $\omega$ a normal state on $\mathcal{E}^{-}$) has a $\Sigma_{\infty}$-ergodic decomposition the uniqueness of which follows from the uniqueness of conditional expectations; see Theorem 36 in \cite{SchwartzTata}. (Notice that the measures $\mu(\cdot | \underline \xi) $ are Radon measures -- see proof of Theorem 38 in \cite{SchwartzTata}.)  

Eq. \eqref{rcp} implies, taking $\Lambda = \Xi$, that 
\begin{equation}\label{rcp1}
  \mu_{\omega}(\cdot) = \int \mu( \cdot | \underline \xi) d  \mu_{\omega}(\underline \xi) ,
\end{equation}
which is close to formula \eqref{keypr}, but is weaker than \eqref{keypr}: While the measures 
$\mu( \cdot | \underline \xi) $ $(\underline \xi \in \Xi)$  in Eq. \eqref{rcp1} are 
$\Sigma_{\infty}-$ergodic, they are not necessarily mutually singular. Following \cite{SchwartzTata},  Chapter 7, Section 3, we can construct a $\Sigma_{\infty}$-ergodic decomposition with \textit{mutually singular} measures. Of course, we cannot parametrize the $\Sigma_{\infty}-$ergodic mutually singular measures  
in a $\Sigma_{\infty}$-ergodic decomposition of $\mu_{\omega}$ by the points (histories) $\underline \xi \in \Xi$, and we need to come up with an appropriate parametrization.

We say that a measure $\rho$ is disintegrated with respect to 
$ \Sigma_{\infty} $ by $  \Big ( 
\mu( \cdot | \underline \xi) \Big )_{\underline \xi \in \Xi} $ if, for every $\Delta  \in \Sigma $ and every $\Lambda \in \Sigma_{\infty}$,
\begin{align}\label{rcpto}
\int_{\Lambda} \mu(\Delta | \underline \xi ) d  \rho (\underline \xi) =    \rho (\Lambda \cap \Delta).  
\end{align}
For every $ \underline \xi \in \Xi$, we define 
\begin{align} 
 {\rm Mol }(\underline \xi) : = \Big \{ \underline \xi' \in \Xi \: \Big | \: \mu( \cdot | \underline \xi) = 
 \mu( \cdot | \underline \xi')   \Big \}.    
\end{align}
The set $   {\rm Mol }(\underline \xi)  $ belongs to $\Sigma_{\infty} $. To see this, we choose a family 
$\big( \Delta_n \big)_{n \in \mathbb{N}}$ of $\Sigma$-measurable sets generating the $\sigma$-algebra 
$\Sigma$.
Since the functions $ \underline{\xi} \mapsto  \mu(\Delta_n | \underline \xi )$ are $\Sigma_{\infty}$-measurable, the set \mbox{$\Big \{ \underline \xi' \in \Xi \: \Big | \: \mu(\Delta_n | \underline \xi ) =
 \mu(\Delta_n | \underline \xi' )   \Big \} $}  belongs to $\Sigma_{\infty}$, for all $n$. The intersection over $n$ of all these sets, which equals $  {\rm Mol }(\underline \xi)  $, then belongs to  $\Sigma_{\infty}$, too.  The symbol $  {\rm Mol }  $ stands for ``molecule''; (notice that  $  {\rm Mol }(\underline \xi)  $ is a union of  ``atoms'').
We define (see Theorem 107 and Proposition 105 in \cite{SchwartzTata}) a set
$\widetilde{\Xi}$ by 
\begin{align}
\widetilde{\Xi} : = \Big \{ \underline{\xi } \in \Xi  \: \Big | \: 
\mu( \cdot | \underline \xi ) \text{  is disintegrated with respect to $ \Sigma_{\infty} $ by $  \Big ( 
\mu( \cdot | \underline \xi') \Big )_{\underline \xi' \in \Xi} \,\,\& \,\,
\mu( {\rm Mol }\big( \underline \xi \big) |  \underline \xi) =  1 $} \Big \}.
\end{align}
In Proposition 105 in \cite{SchwartzTata}
it is proven that $\widetilde \Xi  \in \Sigma$ and $\mu_{\omega}( \widetilde \Xi ) = 1 $. The set 
$ \widetilde \Xi $ serves to avoid repetitions in the decomposition \eqref{rcp1}: We propose to identify all points in a given  molecule ${\rm Mol }\big( \underline \xi \big)$. Let
$ \Xi^\infty  $ be the quotient set of $ \widetilde{\Xi} $ modulo molecules, identifying every molecule with a point. We let $p$ be the natural projection from ${\widetilde{\Xi}}$ onto $\Xi^\infty$, and we define the sigma algebra 
$\Sigma^\infty $ to be generated by the sets $ E \subset \Xi^\infty $ with the property that 
$p^{-1}(E) \in \Sigma_{\infty}$. We then define a measure $P_{\omega}^\infty$ on 
$(\Xi^{\infty}, \Sigma^{\infty})$ by setting 
$$P_{\omega}^\infty (E) =   \mu_{\omega}(p^{-1}(E)), \text{   for every   } E \in \Sigma^{\infty}.$$
 Furthermore, for an arbitrary 
$\nu \in \Xi^{\infty}$ with $\nu=p(\underline{\xi}), \,\underline{\xi} \in \widetilde{\Xi}$, we define
$$\mu( \cdot | \nu) :=  \mu( \cdot |  \underline{\xi} ),$$
 with $\mu(\cdot \vert \underline{\xi})$ as above. Since the molecules form a partition of  $\widetilde{\Xi}$ by $\Sigma_{\infty}$-measurable sets, and, for every $\underline{\xi} \in \widetilde{\Xi}$, 
 $\mu({\rm Mol}( \underline{\xi} )  |\underline{\xi}) =1$, it follows that the measures  
 $\Big(\mu( \cdot  | \nu )   \Big)_{ \nu \in \Xi^\infty  }$ are mutually singular and $ \Sigma_{\infty}-$ergodic. Furthermore, for every measurable set $E \in \Sigma^\infty $ and every $\Delta \in \Sigma $
\begin{align}
 \mu_{\omega}(\Delta \cap p^{-1}(E)) = \int_{  p^{-1}(E) }   \mu(\Delta| \underline{\xi} )\, d \mu_{\omega}( \underline{\xi} ) = \int_{ E }   
 \mu(\Delta|  \nu )\, d  P_{\omega}^\infty ( \nu ). 
\end{align} 
In particular, 
\begin{align}\label{key}
 \mu_{\omega}(\cdot) = \int   \mu( \cdot | \nu ) \, d  P_{\omega}^\infty ( \nu ). 
\end{align}
This is the ergodic decomposition announced in \eqref{keypr}. 
\qed

 
\subsection{Proof of Corollary \ref{CMA}}\label{Choquet}

We recall that Choquet theory is designed to represent elements in certain convex spaces as unique convex combinations (more generally, integrals given in terms of probability measures) of extremal points of those spaces. Here we explain how Eq. \eqref{key} can be viewed as an integral, given in terms of the probability measure $dP^{\infty}_{\omega}$, of extremal probability measures, 
$\mu(\cdot \vert \nu)$, belonging to a convex space of measures.
We define the set, $\mathcal{K}$, of probability measures by
\begin{align}
\mathcal{K}  :=  & \Big \{ \mu \Big | \text{ $ \mu $ is  a Radon probability measure on $(\Xi, \Sigma)$} \\ \notag  & \hspace{0.7cm} \text{  disintegrated with respect to $ \Sigma^{\infty} $ by 
$\Big( \mu( \cdot | \nu) \Big )_{\nu \in \Xi^{\infty}} $ }  \Big \}. 
\end{align}
The set $\mathcal{K}$ is obviously convex. Moreover,
in Proposition 105 of \cite{SchwartzTata} it is proven that $\mu(\widetilde{\Xi})=$ \mbox{$ \mu(p^{-1}(\Xi^{\infty}) ) = 1$,} for every 
$\mu \in \mathcal{K}$.  Finally, Theorem 106 in \cite{SchwartzTata} asserts that the measures 
$ \Big( \mu( \cdot | \nu) \Big)_{\nu \in \Xi^{\infty}}$ are the extremal points in 
$\mathcal{K}$. We conclude that Eq. \eqref{key} is the Choquet decomposition of a measure 
$\mu_{\omega} \in \mathcal{K}$. 
%
%

\subsection{On the spectrum of $\mathcal{O}_{\infty}=\Phi_{\mathcal{I}}(\mathcal{L}_{\infty})$}\label{OSO}

In this subsection we study the special case of Eq. \eqref{keypr} in which the integral on the right side of \eqref{keypr} reduces to a sum. This enables us to understand the role of the spectrum  of the abelian algebra $\mathcal{O}_{\infty}$ in the decomposition \eqref{keypr}.  Thus, we assume that the set 
$\Xi^{\infty}$ is countable. Then
\begin{align}\label{keyotic}
 \mu_{\omega}(\cdot) = \int   \mu( \cdot | \nu ) \,d  P_{\omega}^{\infty}( \nu ) = \sum_{\nu \in \Xi^{\infty}} P^{\infty}_{\omega}(\nu) \mu( \cdot  | \nu), 
\end{align}
for some non-negative numbers $P^{\infty}_{\omega}(\nu)$, with 
$\sum_{\nu \in \Xi^{\infty}} P^{\infty}_{\omega}(\nu) = 1$. 
In the example described in Sect. \ref{SE} the set $\Xi^{\infty}$ is the spectrum of the operator 
$\mathcal{N}_P$ counting the number of electrons bound by $P$.


We recall that the spectrum of an abelian algebra is the set of non-trivial algebra-homomorphisms with values in the complex numbers; these homomorphisms are called ``characters''. We propose to characterize the characters of  $\mathcal{O}_{\infty}$, which can be identified with points, $\nu$, in the set $\Xi^{\infty}$:  \\
Let $f$ be an element of $\mathcal{L}_{\infty}$. Since the measures  
$\{ \mu( \cdot | \nu)  \}_{\nu \in \Xi^{\infty}}$ are $\Sigma_{\infty}$-ergodic and mutually singular, there are constants $ f_{\nu}$, $\nu \in \Xi^{\infty}$ such that 
\begin{equation}
f(\underline{\xi}) = \sum_{\nu \in \Xi^{\infty}} f_{\nu} \chi_{p^{-1}(\nu)}(\underline{\xi}), 
\end{equation}
where $p^{-1}(\nu)$ is the molecule in $\widetilde{\Xi}$ projecting onto $\nu \in \Xi^{\infty}$. 
We may then identify $f$ with the set $(f_{\nu})_{\nu \in \Xi^{\infty}}$. Thus,  $ \mathcal{O}_{\infty} = \Phi_{\mathcal{I}}(\mathcal{L}_{\infty})$ is isomorphic to the space, $C(\Xi^{\infty})$, of bounded continuous functions on the space $\Xi^{\infty}$, which is the spectrum of $\mathcal{O}_{\infty}$. Thus, the points 
$\nu \in \Xi^{\infty}$ appearing in Eq. \eqref{keyotic} are points in the spectrum of the abelian algebra 
$\mathcal{O}_{\infty}$.

\appendix

\section{Some basic definitions }\label{appendix}

In this section, we first recall some general notions and results from the theory of operator algebras. 

\begin{itemize}
\item[A.]{An algebra $\mathcal{E}$ of bounded linear operators is a $C^{*}$-algebra with identity if 
\begin{itemize}
\item{it is closed under taking the adjoint, $A \mapsto A^{*}$, of operators $A \in \mathcal{E}$, where $^{*}$ is an antilinear (anti-)involution on $\mathcal{E}$;}
\item{it is equipped with a norm, $\Vert \cdot \Vert$, with the property that
$$\Vert A^{*}A \Vert= \Vert A \Vert^{2}, \quad \forall A  \in \mathcal{E};$$}
\item{it is closed in the norm $\Vert \cdot \Vert$;}
\item{$\mathcal{E}$ contains an operator ${\bf{1}}$ with the property that
$$A\cdot {\bf{1}} = {\bf{1}}\cdot A = A, \quad \forall A \in \mathcal{E}.$$}
\end{itemize}
}
\item[B.]{The Gel'fand-Naimark-Segal (GNS) construction: Let $\pi$ be a representation of a $C^{*}$-algebra 
$\mathcal{E}$ on a Hilbert space $\mathcal{H}$, and let $\Psi$ be a unit vector in $\mathcal{H}$. Then
\begin{equation}\label{vectstate}
A \mapsto \psi(A):=\langle \Psi, \pi(A) \Psi \rangle_{\mathcal{H}}, \quad A\in \mathcal{E}
\end{equation}
is a state on $\mathcal{E}$; (see Sect. \ref{GF}).\\
Actually, \textit{every} state $\omega$ on $\mathcal{E}$ comes from a vector, $\Omega$, in a Hilbert space, 
$\mathcal{H}_{\omega}$, that carries a distinguished $^{*}$representation, $\pi_{\omega}$, of $\mathcal{E}$, as in \eqref{vectstate}:
\begin{equation}
\omega(A)=\langle \Omega, \pi_{\omega}(A) \Omega \rangle_{\mathcal{H}_{\omega}}, \quad \forall A \in \mathcal{E} . 
\end{equation}
This is the contents of the GNS construction.
The mathematical derivation of this construction is simple enough that we do not repeat it here; but see, e.g., \cite{Ta}, or Wikipedia.\\
If $\omega$ is a pure (i.e., extremal) state on $\mathcal{E}$ then the representation $\pi_{\omega}$ is irreducible.
For every $C^{*}$-algebra $\mathcal{E}$ there exists a family of pure states on $\mathcal{E}$ with the property that the direct sum of the GNS representations corresponding to these pure states is faithful, i.e., it distinguishes different elements of $\mathcal{E}$.
}
\item[C.]{Let $\mathcal{E}$ be a $^{*}$algebra of bounded linear operators acting on a Hilbert space 
$\mathcal{H}$ and containing the identity operator ${\bf{1}}$ on $\mathcal{H}$. We define the commutant,
$\mathcal{E}'$, of $\mathcal{E}$ to be the $^{*}$-algebra of all bounded linear operators, $B$, acting on $\mathcal{H}$ with the property that $B$ commutes with all operators in $\mathcal{E}$, i.e.,
$$[B,A]=0, \qquad \forall A \in \mathcal{E}.$$
The algebra $\mathcal{E}$ is a von Neumann algebra off it is equal to its double commutant, i.e., off
$$\mathcal{E}=(\mathcal{E}')'.$$
By von Neumann's double commutant theorem, this is equivalent to saying that $\mathcal{E}$ is closed in the topology determined by weak convergence of nets of bounded linear operators acting on $\mathcal{H}$.\\
Every von Neumann algebra is a $C^{*}$-algebra.
}
\item[D.]{A state $\omega$ on a von Neumann algebra $\mathcal{E}$ is called \textit{normal} iff it is continuous in the topology determined by weak convergence of nets of operators in $\mathcal{E}$.
}
\item[E.]{The \textit{center}, $\mathcal{Z}_{\mathcal{E}}$, of a $C^{*}$- or a von Neumann algebra $\mathcal{E}$ consists of all operators $Z \in \mathcal{E}$ commuting with all operators in $\mathcal{E}$, i.e.,
$$\mathcal{Z}_{\mathcal{E}}=\lbrace Z \vert Z \in \mathcal{E}, [Z,A]=0, \forall A \in \mathcal{E} \rbrace.$$\\
A von Neumann algebra with trivial center is called a \textit{factor}.\\
The \textit{centralizer}, $\mathcal{C}_{\omega}$, of a state $\omega$ on a von Neumann algebra 
$\mathcal{E}$ consists of all operators, $C$, in $\mathcal{E}$ with the property that 
$$\omega([C,A])=0, \qquad \forall A \in \mathcal{E}.$$
}
\item[F.]{{ In a von-Neumann algebra the notions of weak and $\sigma$-weak convergence coincide on norm bounded sets.}}
\end{itemize}
This completes our brief summary of basic notions and results on operator algebras.

\end{document}